\documentclass[paper]{ieice}

\usepackage{amsmath,amssymb}
\usepackage[dvipdfm]{graphicx}
\usepackage{theorem}

\setcounter{page}{1}

\field{D}
\title[SDP-Method based on Strong Computability for Higher-Order Rewrite Systems]{%
  Static Dependency Pair Method
  based on\\ Strong Computability
  for Higher-Order Rewrite Systems}
\authorlist{
  \authorentry[kusakari@is.nagoya-u.ac.jp]{Keiichirou KUSAKARI}{m}{NAGOYA}
  \authorentry[isogai@trs.cm.is.nagoya-u.ac.jp]{Yasuo ISOGAI}{n}{NAGOYA}
  \authorentry[sakai@is.nagoya-u.ac.jp]{Masahiko SAKAI}{m}{NAGOYA}
  \authorentry[frederic.blanqui@inria.fr]{Fr\'{e}d\'{e}ric Blanqui}{n}{FRANCE}
}
\affiliate[NAGOYA]{%
  Graduate School of Information Science, Nagoya Univ.}
\affiliate[FRANCE]{%
  INRIA \& LORIA, France}

\received{2008}{3}{0}
\revised{2008}{6}{0}
\finalreceived{2008}{0}{0}


\def\imply{\Rightarrow}
\def\gsim{\gtrsim}

\def\da{{\mathcal{\downarrow}}}

\def\B{{\cal B}}
\def\C{{\cal C}}
\def\D{{\cal D}}
\def\S{{\cal S}}
\def\T{{\cal T}}
\def\V{{\cal V}}
\def\SC{\scriptscriptstyle SC}

\def\args{\mathit{args}}
\def\safe{\mathit{safe}}

\newcommand{\pair}[1]{\langle#1\rangle}
\newcommand{\sym}[1]{\mathtt{#1}}
\newcommand{\ol}[1]{\overline{#1}}
\newcommand{\ul}[1]{\underline{#1}}

\newcommand{\Reduction}[9]{\mathrel{%
  {}_{#1}^{#2}%
  \raise#7\hbox{%
    \vtop{\ialign{##\crcr%
      \hfil\raise#8\hbox{$\scriptstyle{\ #4\ }$}\hfil\crcr%
      \noalign{\nointerlineskip}%
      #9\crcr%
      \noalign{\nointerlineskip}%
      $\hfil\scriptscriptstyle{\ #3\ }\hfil$\crcr%
    }}%
  }%
  {}_{\scriptscriptstyle #5}^{#6}}}%

\newcommand{\red}[2]{%
  \Reduction{}{}{#1}{#2}{}{}{0.8ex}{0.2ex}{\rightarrowfill}}

\theorembodyfont{\normalfont}
\newtheorem{definition}{Definition}[section]
\newtheorem{lemma}[definition]{Lemma}
\newtheorem{theorem}[definition]{Theorem}
\newtheorem{corollary}[definition]{Corollary}
\newtheorem{proposition}[definition]{Proposition}
\newtheorem{example}[definition]{Example}
\newtheorem{proof}{Proof.\hspace*{-3pt}}

\def\qed{\hspace*{\fill}$\square$}

\begin{document}
\maketitle

\begin{summary}
Higher-order rewrite systems (HRSs)
and simply-typed term rewriting systems (STRSs)
are computational models of functional programs.
We recently proposed an extremely powerful method,
the static dependency pair method,
which is based on the notion of strong computability,
in order to prove termination in STRSs.
In this paper, we extend the method to HRSs.
Since HRSs include $\lambda$-abstraction but STRSs do not,
we restructure the static dependency pair method
to allow $\lambda$-abstraction,
and show that the static dependency pair method also works well on HRSs
without new restrictions.
\end{summary}

\begin{keywords}
  Higher-Order Rewrite System, 
  Termination, 
  Static Dependency Pair, 
  Plain Function-Passing, 
  Strong Computability,
  Subterm Criterion.
\end{keywords}

\section{Introduction}

A term rewriting system (TRS) is a computational model
that provides operational semantics for functional programs \cite{T03}.
A TRS cannot, however, directly handle higher-order functions,
which are widely used in functional programming languages.
Simply-typed term rewriting systems (STRSs) \cite{K01}
and higher-order rewrite systems (HRSs) \cite{N91}
have been introduced to extend TRSs.
These rewriting systems can directly handle higher-order functions.
For example, a typical higher-order function $\sym{foldl}$
can be represented by the following HRS $R_{\sym{foldl}}$:
\[\left\{\begin{array}{l}
  \sym{foldl}(\lambda xy.F(x,y), X, \sym{nil}) \to X \\
  \sym{foldl}(\lambda xy.F(x,y), X, \sym{cons}(Y,L)) \\
    \hspace*{70pt}\to \sym{foldl}(\lambda xy.F(x,y), F(X,Y), L)
\end{array}\right.\]
HRSs can represent anonymous functions
because 
HRSs have a $\lambda$-abstraction syntax,
which STRSs do not.
For instance,
an anonymous function $\lambda xy. \sym{add}(x,\sym{mul}(y,y))$ is used
in the HRS $R_{\sym{sqsum}}$,
which is the union of
$R_{\sym{foldl}}$ and the following rules:
\[\left\{\begin{array}{l}
  \sym{add}(0,Y)    \to Y           \\
  \sym{add}(\sym{s}(X),Y) \to \sym{s}(\sym{add}(X,Y)) \\
  \sym{mul}(0,Y)    \to 0           \\
  \sym{mul}(\sym{s}(X),Y) \to \sym{add}(\sym{mul}(X,Y),Y) \\
  \sym{sqsum}(L)     \to \sym{foldl}(\lambda xy.\sym{add}(x,\sym{mul}(y,y)), 0, L)
\end{array}\right.\]
Here, the function $\sym{sqsum}$ returns
the square sum
$x_1^2 + x_2^2 + \cdots + x_n^2$
from an input list $[x_1, x_2, \dots, x_n]$.

As a method for proving termination of TRSs,
Arts and Giesl proposed the dependency pair method for TRSs
based on recursive structure analysis \cite{AG00},
which was then extended to STRSs \cite{K01}, and to HRSs \cite{SWS01}.

In higher-order settings,
there are two kinds of analysis for recursive structures.
One is dynamic analysis, and the other is static analysis.
The extensions in \cite{K01} and \cite{SWS01} analyze
dynamic recursive structures
based on function-call dependency relationships,
but not on relationships that may be extracted syntactically
from function definitions.
When a program runs,
some functions can be substituted for higher-order variables.
Dynamic recursive structure analysis
considers dependencies through higher-order variables.
Static recursive structure analysis on the other hand,
does not consider such dependencies.

For example, consider the HRS $R_{\sym{sqsum}}$.
The dynamic dependency pair method in \cite{SWS01}
extracts the following 9 pairs,
called dynamic dependency pairs:
\[\left\{\begin{array}{lr}
  \sym{foldl}^\sharp(\lambda xy.F(x,y), X, \sym{cons}(Y,L))
    \\\hspace*{50pt}
    \to \sym{foldl}^\sharp(\lambda xy.F(x,y), F(X,Y), L)
    & (a) \\
  \sym{foldl}^\sharp(\lambda xy.F(x,y), X, \sym{cons}(Y,L))
    \to F(c_x,c_y)
    & (b) \\
  \sym{foldl}^\sharp(\lambda xy.F(x,y), X, \sym{cons}(Y,L))
    \to F(X,Y)
    & (c) \\
  \sym{add}^\sharp(\sym{s}(X),Y) \to \sym{add}^\sharp(X,Y)
    & (d) \\
  \sym{mul}^\sharp(\sym{s}(X),Y) \to \sym{add}^\sharp(\sym{mul}(X,Y),Y)
    & (e) \\
  \sym{mul}^\sharp(\sym{s}(X),Y) \to \sym{mul}^\sharp(X,Y)
    & (f) \\
  \sym{sqsum}^\sharp(L)
    \to \sym{foldl}^\sharp(\lambda xy.\sym{add}(x,\sym{mul}(y,y)), 0, L)
    & (g) \\
  \sym{sqsum}^\sharp(L)
    \to \sym{add}^\sharp(c_x,\sym{mul}(c_y,c_y))
    & (h) \\
  \sym{sqsum}^\sharp(L)
    \to \sym{mul}^\sharp(c_y,c_y)
    & (i)
\end{array}\right.\]
Here $c_x,c_y$ are fresh constants
corresponding to the bound variables $x$ and $y$.
The dynamic dependency pair method returns the following 15 components,
called dynamic recursion components:
\[\left\{\begin{array}{l}
    \{ (a) \}, \{ (b) \}, \{ (c) \}, \{ (d) \}, \{ (f) \},
    \{ (a),(b) \},\\
    \{ (a),(c) \}, \{ (b),(c) \}, \{ (b),(g) \}, \{ (c),(g) \},\\
    \{ (a),(b),(c) \}, \{ (a),(b),(g) \}, \{ (a),(c),(g) \},\\
    \{ (b),(c),(g) \},
    \{ (a),(b),(c),(g) \}
\end{array}\right\}\]
It is intuitive that
this recursive structure analysis may be unnatural and intractable.
The problem is caused by function-call dependency relationships
through the higher-order variable $F$.

The static dependency pair method,
which is based on definition dependency relationships,
can solve the unnatural and intractable problem above.
Since the static dependency pair method can ignore terms
headed by a higher-order variable which are difficult to handle,
in this meaning
the static dependency pair method is more natural and more powerful
than the dynamic dependency pair method.
In fact, the static dependency pair method presented in this paper
shows that $R_{\sym{sqsum}}$ only has
the following 3 static recursion components:
\[\begin{array}{l}
  \left\{\begin{array}{l}
    \sym{foldl}^\sharp(\lambda xy.F(x,y), X, \sym{cons}(Y,L)) \\
    \hspace*{60pt}\to \sym{foldl}^\sharp(\lambda xy.F(x,y), F(X,Y), L)
  \end{array}\right.\\
  \left\{\begin{array}{l}
    \sym{add}^\sharp(\sym{s}(X),Y) \to \sym{add}^\sharp(X,Y)
  \end{array}\right.\\
  \left\{\begin{array}{l}
    \sym{mul}^\sharp(\sym{s}(X),Y) \to \sym{mul}^\sharp(X,Y)
  \end{array}\right.
\end{array}\]

The first result for the static dependency pair method
was given by Sakai and Kusakari \cite{SK05}.
However, this result demanded that target HRSs be
either `strongly linear' or `non-nested',
which is a very strong restriction.
By reconstructing a dependency pair method
based on the notion of strong computability,
Kusakari and Sakai proposed the static dependency pair method for STRSs
and showed that the method is sound
for plain function-passing STRSs \cite{KS07}.
Note that strong computability
was introduced for proving termination in typed $\lambda$-calculus,
which is a stronger condition
than the property of termination \cite{G72,T67}.
`Plain function-passing' means that every higher-order variable
occurs in an argument position on the left-hand side.
Since many non-artificial functional programs are plain function-passing,
this method has a general versatility.
In this paper,
we extend the static dependency pair method 
and the notion of plain function-passing to HRSs.
Since the difference between STRSs and HRSs is
the existence of anonymous functions
(i.e. $\lambda$-abstraction),
extension is necessary.
We show that
our static dependency pair method works well
on plain function-passing HRSs
without new restrictions.

When proving termination by dependency pair methods,
non-loopingness should be shown for each recursion component.
The notion of the subterm criterion \cite{HM04} is frequently utilized,
as is that of a reduction pair \cite{KNT99},
which is an abstraction of the weak-reduction order \cite{AG00}.
The subterm criterion was slightly improved
by extending the subterms permitted by the criterion \cite{KS07}.
Since the subterm criterion and reduction pairs
are effective in termination proofs,
we also reformulate these notions for HRSs.
An effective and efficient method of proving termination
in plain function-passing HRSs is obtained as a result.
These results can be used to prove the termination of $R_{\sym{sqsum}}$,
which cannot be achieved with the dynamic dependency pair method in \cite{SWS01}.
It can easily be seen that
each static recursion component satisfies the subterm criterion
in the underlined positions:
\[\begin{array}{l}
  \left\{\begin{array}{l}
    \sym{foldl}^\sharp(\lambda xy.F(x,y), X, \ul{\sym{cons}(Y,L)}) \\
    \hspace*{60pt}\to \sym{foldl}^\sharp(\lambda xy.F(x,y), F(X,Y), \ul{L})
  \end{array}\right.\\
  \left\{\begin{array}{l}
    \sym{add}^\sharp(\ul{\sym{s}(X)},Y) \to \sym{add}^\sharp(\ul{X},Y)
  \end{array}\right.\\
  \left\{\begin{array}{l}
    \sym{mul}^\sharp(\ul{\sym{s}(X)},Y) \to \sym{mul}^\sharp(\ul{X},Y)
  \end{array}\right.
\end{array}\]
The termination of $R_{\sym{sqsum}}$ can thus be shown easily.

The remainder of this paper is organized as follows.
The next section provides preliminaries required later in the paper.
In Section 3,
we introduce the notion of strong computability,
which provides a theoretical rationale
for the static dependency pair method.
In Section 4,
we describe the notion of plain function-passing.
In Section 5,
we present the static dependency pair method
for plain function-passing HRSs,
the soundness of which is guaranteed by the notion of strong computability.
In Section 6,
we introduce the notions of the reduction pair and the subterm criterion
in order to prove the non-loopingness of static recursion components.
Concluding remarks are presented in Section 7.

\section{Preliminaries}

In this section, 
we give preliminaries needed later on.
We assume that the reader is familiar with notions
for TRSs and HRSs \cite{T03}.

The set $\S$ of {\em simple types}
is generated from the set $\B$ of {\em basic types}
by the type constructor $\to$.
A {\em functional type} or a {\em higher-order type}
is a simple type of the form $\alpha \to \beta$.
We denote by $\V_{\alpha}$
the set of variables of type $\alpha$,
and denote by $\Sigma_{\alpha}$
the set of function symbols of type $\alpha$.
We define $\V = \bigcup_{\alpha \in \S} \V_\alpha$
and $\Sigma = \bigcup_{\alpha \in \S} \Sigma_\alpha$.
We assume that the sets of variables and function symbols are disjoint.
The set $\T_\alpha^{pre}$ of {\em simply-typed preterms}
with simple type $\alpha$
is generated from sets $\V \cup \Sigma$
by $\lambda$-abstraction and $\lambda$-application.
We denote by $t \da$ the $\eta$-long $\beta$-normal form
of a simply-typed preterm $t$.
The set $\T_\alpha$ of {\em simply-typed terms with a simple type $\alpha$}
is defined as $\{ t \da \mid t \in \T_\alpha^{pre} \}$.
We denote $type(t)=\alpha$ if $t \in T_\alpha$.
We also define the set $\T$ of {\em simply-typed terms}
by $\bigcup_{\alpha \in \S} \T_\alpha$,
and the set $\T_{\B}$ of {\em basic typed terms}
by $\bigcup_{\alpha \in \B} \T_\alpha$.
We write $t^\alpha$ to stand for $t \in \T_\alpha$.
Any term in $\eta$-long $\beta$-normal form is of the form
$\lambda x_1 \cdots x_m. a~t_1~\cdots~t_n$,
where $a$ is a variable or a function symbol.
We remark that
$\lambda x_1 \cdots x_m. a~t_1~\cdots~t_n$
is denoted with
$\lambda x_1 \cdots x_m. a(t_1,\ldots,t_n)$
or $\lambda \ol{x_m}. a(\ol{t_n})$ in short.
The $\alpha$-equality of terms is denoted by $\equiv$.
For a simply-typed term $t \equiv \lambda \ol{x_m}. a(\ol{t_n})$,
the symbol $a$, denoted by $top(t)$,
is said to be the {\em top symbol} of $t$,
and the set $\{t_1,\ldots,t_n\}$, denoted by $\args(t)$,
is said to be {\em arguments} of $t$.
The set of free variables in $t$ denoted by $FV(t)$.
We assume for convenience
that bound variables in a term are all different,
and are disjoint from free variables.
We define the set $Sub(t)$ of {\em subterms} of $t$
by $\{t\} \cup Sub(s)$ if $t \equiv \lambda x. s$;
$\{t\} \cup \bigcup_{i=1}^n Sub(t_i)$ if $t \equiv a(t_1,\ldots,t_n)$.
We use $t \geq_{sub} s$ to represent $s \in Sub(t)$,
and define
$t >_{sub} s$ by $t \geq_{sub} s$ and $t \not\equiv s$.
The set of {\em positions} of a term $t$
is the set $Pos(t)$ of strings over positive integers,
which is inductively defined as
$Pos(\lambda x. t) = \{ \varepsilon \} \cup \{ 1 p \mid p \in Pos(t) \}$
and
$Pos(a(t_1,\ldots,t_n))$
  $= \{\varepsilon\} \cup \bigcup_{i=1}^n \{ ip \mid p \in Pos(t_i)\}$.
The {\em prefix order} $\prec$ on positions
is defined by $p \prec q$
iff $pw = q$ for some $w$ ($\neq \varepsilon$).
The subterm of $t$ at position $p$ is denoted by $t|_p$.

A term containing a special constant $\square_\alpha$
of type $\alpha$ is called a {\em context},
denoted by $C[\,]$.
We use $C[t]$ for the term obtained from $C[\,]$
by replacing $\square_\alpha$ with $t^\alpha$.
A substitution $\theta$ is a mapping from variables to terms
such that $\theta(X)$ has a same type of $X$ for each variable $X$.
We define $Dom(\theta) = \{ X \mid X \not\equiv \theta(X) \}$.
A substitution is naturally extended to a mapping from terms to terms.

A {\em rewrite rule} is a pair $(l,r)$ of terms,
denoted by $l \to r$,
such that $top(l) \in \Sigma$,
$type(l) = type(r) \in \B$ and $FV(l) \supseteq FV(r)$\footnote{%
  In order to guarantee
  the decidability of higher-order pattern-matching,
  Nipkow restricts rewrite rules by the notion of pattern \cite{N91}.
  Such a restriction, however, is not necessary to our study.}.
A higher-order rewrite system (HRS) is a set of rules.
The {\em reduction relation} $\red{R}{}$ of an HRS $R$
is defined by $s \red{R}{} t$
iff 
$s \equiv C[l \theta \da]$ and $t \equiv C[r \theta \da]$
for some rule $l \to r \in R$,
context $C[\,]$ and substitution $\theta$.
The transitive-reflexive closure of $\red{R}{}$
is denoted by $\red{R}{*}$.

\begin{proposition}\cite{MN98}\label{prop:close-substitution}
If $s \red{R}{*} t$
then $s \theta \da \red{R}{*} t \theta \da$.
\end{proposition}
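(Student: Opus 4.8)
The plan is to reduce the claim to a single rewrite step and then to commute the substitution $\theta$ with that step, the only real difficulty being the $\beta\eta$-normalisation that $\theta$ may trigger around the contracted redex.

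First I would argue by induction on the length of the reduction $s \red{R}{*} t$. The reflexive case is immediate, since $s \equiv t$ gives $s\theta\da \equiv t\theta\da$; for the inductive case $s \red{R}{} s' \red{R}{*} t$ it suffices to establish the one-step statement ``$s \red{R}{} t$ implies $s\theta\da \red{R}{*} t\theta\da$'' and then compose it with the induction hypothesis applied to $s' \red{R}{*} t$. So everything reduces to the single step $s \equiv C[l\sigma\da] \red{R}{} C[r\sigma\da] \equiv t$ for some rule $l \to r$, context $C[\,]$ and substitution $\sigma$.

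The algebraic heart is a commutation lemma between substitution and $\da$: for every preterm $u$, $(u\sigma\da)\theta\da \equiv (u\,\sigma\theta)\da \equiv u\rho\da$, where $\rho$ is the substitution $X \mapsto (\sigma(X)\theta)\da$. This rests only on the confluence and strong normalisation of simply-typed $\beta\eta$-reduction (so that $\da$ is a well-defined function) together with the stability of $\beta\eta$-reduction under substitution: from $u\sigma \twoheadrightarrow_{\beta\eta} u\sigma\da$ one gets $u\sigma\theta \twoheadrightarrow_{\beta\eta} (u\sigma\da)\theta$ and $u\sigma\theta \twoheadrightarrow_{\beta\eta} u\rho$, and uniqueness of normal forms identifies the three $\da$'s. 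Applying this to $l$ and to $r$ yields $(l\sigma\da)\theta\da \equiv l\rho\da$ and $(r\sigma\da)\theta\da \equiv r\rho\da$; note that $l\rho\da \red{R}{} r\rho\da$ is itself a genuine $R$-step (take the empty context and the substitution $\rho$).

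It then remains to propagate this step through the surrounding context, which I would do by structural induction on $s$ (equivalently on $C[\,]$), applying the commutation lemma at each layer. The cases $s \equiv \lambda x.\,s'$ and $s \equiv a(\ol{s_n})$ with $a$ a function symbol or a variable outside $Dom(\theta)$ are routine: $\theta$ and $\da$ distribute over the head, the redex stays in a single argument, and one closes the induction hypothesis under the corresponding $R$-context. The crux, and the main obstacle, is the case $s \equiv Y(\ol{s_n})$ with $Y \in Dom(\theta)$ and $\theta(Y)$ an abstraction: applying $\theta$ creates a head $\beta$-redex whose contraction during normalisation can duplicate or erase the argument that contains our redex. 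Tracking the residuals of that argument exhibits a multi-hole context $C'$ with $s\theta\da \equiv C'[l\rho\da, \ldots, l\rho\da]$ and $t\theta\da \equiv C'[r\rho\da, \ldots, r\rho\da]$; contracting the finitely many copies one at a time then gives $s\theta\da \red{R}{*} t\theta\da$. This duplication/erasure phenomenon is precisely why the conclusion must be stated with $\red{R}{*}$ rather than a single $\red{R}{}$, and controlling these residuals cleanly — ideally by phrasing the whole argument in terms of the commutation lemma and multi-hole contexts rather than ad hoc bookkeeping — is the step I expect to require the most care.
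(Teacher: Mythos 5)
The paper contains nothing to compare your attempt against: Proposition~\ref{prop:close-substitution} is imported verbatim from Mayr and Nipkow \cite{MN98} and is not proved in the paper, so your argument has to stand on its own. Judged that way, its skeleton is the right one and matches the standard proof of this fact: the reduction to a single step, the commutation lemma $(u\sigma\da)\theta\da \equiv u\rho\da$ with $\rho(X) = (\sigma(X)\theta)\da$ (justified by confluence and strong normalization of simply-typed $\beta\eta$-reduction plus stability under substitution), and the observation that duplication and erasure of the contracted redex are exactly why the conclusion is $\red{R}{*}$ rather than $\red{R}{}$, are all correct.

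The gap is in the step you yourself identify as the crux, and it is not merely a matter of care: the identity $s\theta\da \equiv C'[l\rho\da,\ldots,l\rho\da]$, with the \emph{same} instance $l\rho\da$ in every hole, is false in general, because the redex $l\sigma\da$ may contain variables bound in the surrounding context, and normalization after applying $\theta$ instantiates these differently in each copy. Concretely, take $R=\{\sym{f}(W)\to W\}$ with $W$ of basic type, let $s \equiv Y(\lambda x.\sym{f}(x)) \red{R}{} Y(\lambda x.x) \equiv t$ (matching substitution $\sigma=\{W:=x\}$, context $Y(\lambda x.\square)$), and let $\theta=\{Y:=\lambda z.\sym{g}(z(\sym{c}),z(\sym{d}))\}$. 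Then $s\theta\da \equiv \sym{g}(\sym{f}(\sym{c}),\sym{f}(\sym{d}))$ and $t\theta\da \equiv \sym{g}(\sym{c},\sym{d})$: the two residuals $\sym{f}(\sym{c})$ and $\sym{f}(\sym{d})$ are instances of $l$ under two \emph{different} substitutions, and neither equals $l\rho\da \equiv \sym{f}(x)$, so the multi-hole context you posit does not exist. The same example shows that tracking residuals of ``the argument containing the redex'' under the head $\beta$-step alone cannot suffice: the copies of $\lambda x.\sym{f}(x)$ are applied to fresh arguments $\sym{c},\sym{d}$ and re-normalized, so duplication-with-instantiation recurs at deeper levels whenever the duplicated argument has functional type. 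What the proof actually needs---and what constitutes the real content of the Mayr--Nipkow lemma---is the stronger invariant that every residual of the ($\Sigma$-headed, basic-type) redex occurrence under $\beta\eta$-normalization has the form $l\rho'\da$ for \emph{some} substitution $\rho'$ extending $\rho$ on the context-bound variables, established by an induction (on types, or via finite developments); each such residual is still an $R$-redex, and contracting all of them yields $t\theta\da$. Your strategy survives this repair, but as written the key identity on which the argument rests is false.
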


A term $t$ is said to be
{\em terminating} or {\em strongly normalizing} in an HRS $R$,
denoted by $SN(R,t)$,
if there is no infinite sequence of $R$ steps
starting from $t$.
We simply denote $SN(R)$
if $SN(R,t)$ holds for any term $t$.
We also define $\T_{SN}(R) = \{ t \mid SN(R,t) \}$,
$\T_{\neg SN}(R) = \T \setminus \T_{SN}(R)$,
and $\T_{SN}^{\args}(R) = \{ t \mid \forall u \in \args(t).SN(R,u)\}$.

All top symbols of the left-hand sides of rules in an HRS $R$,
denoted by $\D_R$, are called {\em defined},
whereas all other function symbols,
denoted by $\C_R$, are {\em constructors}.
We define the {\em marked term} $t^\sharp$
by $a^\sharp(t_1,\ldots,t_n)$
if $t$ has a form $a(t_1,\ldots,t_n)$ with $a \in \D_R$;
otherwise $t^\sharp \equiv t$.
Here $a^\sharp$ is called a {\em marked symbol}.

\section{Strong Computability}

In this section,
we define the notion of strong computability,
introduced for proving termination in typed $\lambda$-calculus,
which is a stronger condition
than the property of termination \cite{G72,T67}.
This notion provides a theoretical rationale
for the static dependency pair method.

\begin{definition}[Strong Computability]
A term $t$ is said to be {\em strongly computable} in an HRS $R$
if $SC(R,t)$ holds,
which is inductively defined on simple types as follows:
\begin{itemize}
\item
  in case of $type(t) \in \B$,
  $SC(R, t)$ is defined as $SN(R,t)$,
\item
  in case of $type(t)=\alpha \to \beta$,
  $SC(R, t)$ is defined as
  $\forall u \in \T_\alpha. (SC(R, u) \imply SC(R, (t u)\da))$.
\end{itemize}
We also define $\T_{SC}(R) = \{ t \mid SC(R,t) \}$,
$\T_{\neg SC}(R) = \T \setminus \T_{SC}(R)$,
and $\T_{SC}^{\args}(R) = \{ t \mid \forall u \in \args(t).SC(R,u)\}$.
\end{definition}

Here we give the basic properties for strong computability,
needed later on.

\begin{lemma}\label{lem:sc1}
For any HRS $R$, the following properties hold:
\begin{description}
\item{(1)}
  For any $(t_0\,t_1\,\cdots\,t_n)\da \in \T$,
  if $SC(R, t_i)$ holds for all $t_i$,
  then $SC(R, (t_0\,t_1\,\cdots\,t_n)\da)$.
\item{(2)}
  For any $t^{\alpha_1 \to \cdots \to \alpha_n \to \alpha}$,
  if $\neg SC(R, t)$,
  then there exist strongly computable terms
  $u_i^{\alpha_i}$ ($1 \leq i \leq n$)
  such that $\neg SC(R, (t\,u_1\,\cdots\,u_n)\da)$.
\item{(3)}
  $SC(R, s)$ and $s \red{R}{*} t$ implies $SC(R, t)$,
  for all $s, t$.
\item{(4)}
  The $\eta$-long $\beta$-normal form $z\da$
  of any variable $z^\alpha$
  is strongly computable,
  for all types $\alpha$.
\item{(5)}
  $SC(R,t^\alpha)$ implies $SN(R,t^\alpha)$,
  for all types $\alpha$.
\end{description}
\end{lemma}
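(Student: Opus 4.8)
The plan is to handle all five items by induction on type structure (for (3)--(5)) or on the number of arguments (for (1),(2)), exploiting that $SC$ at a functional type is defined purely through application while at a basic type it collapses to $SN$. A recurring ingredient will be \emph{monotonicity of reduction under application}: $s \red{R}{*} t$ implies $(s\,u)\da \red{R}{*} (t\,u)\da$. I expect to obtain this from Proposition~\ref{prop:close-substitution} by first pushing the reduction through a fresh variable $z$ of the type of $u$, giving $(s\,z)\da \red{R}{*} (t\,z)\da$, and then instantiating $z$ by $u$ via the substitution $[z:=u]$, since $(s\,z)\da[z:=u]\da \equiv (s\,u)\da$.

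For (1) I would induct on the number $n$ of arguments. The case $n=0$ is the hypothesis; for the step, $SC(R,t_0)$ and $SC(R,t_1)$ give $SC(R,(t_0\,t_1)\da)$ directly from the functional-type clause, and the induction hypothesis applied to the shorter list $(t_0\,t_1)\da, t_2, \ldots, t_n$ finishes, using associativity of $\da$ to identify $((t_0\,t_1)\da\,t_2\cdots t_n)\da$ with $(t_0\cdots t_n)\da$. Item (2) is the dual, read contrapositively: $\neg SC(R,t)$ at a functional type yields a strongly computable $u_1$ with $\neg SC(R,(t\,u_1)\da)$, and the induction hypothesis on $(t\,u_1)\da$, whose type has one fewer arrow, supplies $u_2,\ldots,u_n$. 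Neither of these items needs the others.

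Item (3) I would prove by induction on $type(s)$: at a basic type $SC$ is $SN$, which is trivially closed under $\red{R}{*}$; at a functional type, given $SC(R,s)$, $s\red{R}{*}t$ and any strongly computable $u$, monotonicity gives $(s\,u)\da \red{R}{*} (t\,u)\da$, and since $SC(R,(s\,u)\da)$ holds by definition, the induction hypothesis at the smaller result type transfers $SC$ to $(t\,u)\da$. The delicate pair is (4) and (5), which are mutually referential, so I would prove them together by one induction on the type $\alpha$. For (5) at $\alpha=\beta\to\gamma$, instantiate $SC(R,t)$ at the strongly computable variable $z\da$ (available from (4) at the smaller type $\beta$) to get $SC(R,(t\,z\da)\da)$, hence $SN(R,(t\,z\da)\da)$ by hypothesis (5) at $\gamma$; faithfulness of application to a variable then forces $SN(R,t)$. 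For (4) at $\alpha=\beta_1\to\cdots\to\beta_m\to\gamma$ with $\gamma\in\B$, I must show $SN(R,z(v_1\da,\ldots,v_m\da))$ for arbitrary strongly computable $v_i$; each $v_i$ is $SN$ by (5) at the smaller type $\beta_i$, and since the head $z$ is a variable no rule rewrites at the root and normalization creates no head redex, so the whole term is $SN$ exactly because all its arguments are.

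The main obstacle I anticipate is not the inductive skeleton but the careful management of the $\eta$-long $\beta$-normal form operator $\da$. Establishing the monotonicity property above and, for (5), the \emph{reflection} of infinite reductions through application both require reconciling $R$-steps with renormalization; the reflection in particular works only because I apply $t$ to a variable $z\da$, so that substitution neither erases nor duplicates redexes and an infinite reduction of $t$ lifts step-for-step to $(t\,z\da)\da$. Likewise, the variable-headed $SN$ argument in (4) rests on confirming that $\beta\eta$-normalization introduces no new redex at the head when the head is a variable. Once these normalization facts are in place, the five statements follow by the routine inductions sketched above.
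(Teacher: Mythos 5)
Your proposal is correct and takes essentially the same route as the paper: (1) and (2) by induction on $n$, (3) by induction on the type using Proposition~\ref{prop:close-substitution}, and (4) and (5) by simultaneous induction on the type, where (5) applies $t$ to a strongly computable variable and (4) uses that a variable-headed term whose arguments all terminate is itself terminating. The normalization facts you flag as the remaining obstacle are discharged in the paper by a single observation---every rule's left-hand side has basic type, so any reduction step of $\lambda x.s'$ occurs inside the body $s'$---after which Proposition~\ref{prop:close-substitution} yields both your monotonicity property and the reflection of infinite reductions through application to a fresh variable.
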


\begin{proof}
The properties (1) and (2) are easily shown by induction on $n$.
\begin{description}
\item{(3)}
  We prove the claim by induction on $type(t)$.
  The case $type(t) \in \B$ is trivial.
  Suppose that $type(s) = type(t) = \alpha \to \beta$.
  Let $s \equiv \lambda x. s'$,
  $t \equiv \lambda x. t'$,
  and $u^\alpha$ be an arbitrary strongly computable term.
  Since $type(l) \in \B$ for every $l \to r \in R$,
  we have $s' \red{R}{*} t'$.
  From Proposition \ref{prop:close-substitution},
  we have
  $(s u)\da \equiv s'\{x:=u\} \red{R}{*} t'\{x:=u\} \equiv (t u)\da$.
  Since $(s u)\da$ is strongly computable,
  $SC(R, (t u)\da)$ follows from the induction hypothesis.
  Hence $t$ is strongly computable.
\item{(4,5)}
  We prove claims by simultaneous induction on $\alpha$.
  The case $\alpha \in \B$ is trivial.
  Suppose that
  $\alpha = \alpha_1 \to \cdots \to \alpha_n \to \beta$
  and $\beta \in \B$.

  Induction step of (4):
  Assume that $z\da$ is not strongly computable
  for some $z \in \V_\alpha$.
  From (2),
  there exist strongly computable terms
  $u_1^{\alpha_1}, \ldots, u_n^{\alpha_n}$
  and $(z(u_1,\ldots,u_n))\da \equiv z(u_1,\ldots,u_n)$
  is not strongly computable.
  From the induction hypothesis (5),
  each $u_i$ is terminating, hence so is $z(u_1,\ldots,u_n)$.
  Since $z(u_1,\ldots,u_n)$ is of basic types,
  $z(u_1,\ldots,u_n)$ is strongly computable.
  This is a contradiction.

  Induction step of (5):
  From the induction hypothesis (4),
  $y\da$ is strongly computable for any $y \in \V_{\alpha_1}$,
  hence so is $(ty)\da$.
  From the induction hypothesis (5),
  $(ty)\da$ is terminating, hence so is $t$.
\qed
\end{description}
\end{proof}

\section{Plain Function-Passing}

The static dependency pair method defined in the next section
cannot be applied to HRSs in general.
For example, consider the HRS
\(R =
  \{ \sym{foo}(\sym{bar}(\lambda x.F(x)))
     \to F(\sym{bar}(\lambda x.F(x)))\}\).
Since the defined symbol $\sym{foo}$
does not occur on the right hand side,
no static recursive structure exists.
However, $R$ is not terminating:
\(\sym{foo}(\sym{bar}(\lambda x.\sym{foo}(x)))
  \red{R}{} \sym{foo}(\sym{bar}(\lambda x.\sym{foo}(x)))
  \red{R}{} \cdots\).
The static dependency pair method therefore requires a suitable restriction.
In \cite{SK05},
we introduced the notions of `strongly linear' and `non-nested' HRSs.
However, these restrictions are too tight.
For STRSs
we presented the notion of plain function-passing,
which covers practical level programs \cite{KS07}.
Intuitively, plain function-passing means that
higher-order free variables on the left-hand side
are passed to the right-hand side directly.
In this section,
we extend the notion of plain function-passing to HRSs.

\begin{definition}
Let $R$ be an HRS and $l \to r \in R$.
We define the set $\safe(l)$ of
{\em safe subterms} of $l$ as the following:
\[\args(l) \cup \bigcup_{l' \in \args(l)}
    \hspace*{-5pt}\{ u \in \safe_{\B}(l', FV(l)) \mid FV(l) \supseteq FV(u) \},\]
where
$\safe_{\B}(\lambda \ol{x_m}. a(\ol{t_n}), X)$ is defined as 
$\{a(\ol{t_n})\}$ if $a \in X$;
otherwise $\{a(\ol{t_n})\} \cup \bigcup_{i=1}^n \safe_{\B}(t_i, X)$.
\end{definition}

We note that $\safe(l) \subseteq Sub(l)$
and any $t \in \safe_{\B}(l',FV(l))$ is of basic types.

\begin{example}\label{ex:pfp1}
Consider HRS $R_{\sym{foldl}}$ displayed in the introduction.
Suppose that
\[l \equiv \sym{foldl}(\lambda xy.F(x,y), Y, \sym{cons}(X, L)).\]
For each argument $u \in \args(l)$,
$\safe_{\B}(u,FV(l))$ is the following:
\begin{eqnarray*}
  \safe_{\B}(\lambda xy.F(x,y), FV(l)) &=& \{ F(x,y) \} \\
  \safe_{\B}(Y, FV(l)) &=& \{ Y \} \\
  \safe_{\B}(\sym{cons}(X, L), FV(l)) &=& \{ \sym{cons}(X, L), X, L \}
\end{eqnarray*}
Since $FV(F(x,y)) \not\subseteq FV(l)$,
safe subterms $\safe(l)$ is the following:
\begin{eqnarray*}
  \safe(l)
    &=& \args(l) \cup \{ Y, \sym{cons}(X, L), X, L \} \\
    &=& \{ \lambda xy.F(x,y), Y, \sym{cons}(X, L), X, L \}
\end{eqnarray*}
\end{example}

We prepare a technical lemma
to show the soundness of the static dependency pair method.

\begin{lemma}\label{lem:pfp}
Let $R$ be an HRS, $l \to r \in R$
and $\theta$ be a substitution.
Then
$l \theta \da \in \T^{\args}_{SC}(R)$
implies
$SC(R, s \theta \da)$ for any $s \in \safe(l)$.
\end{lemma}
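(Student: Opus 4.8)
The plan is to split $\safe(l)$ into its two constituents. First I would note that, since $f=top(l)\in\Sigma$ is a function symbol and is therefore untouched by $\theta$, and since $type(l)\in\B$ forces $l$ to carry no leading abstractions, we have $l\theta\da\equiv f(l_1\theta\da,\ldots,l_k\theta\da)$ where $\args(l)=\{l_1,\ldots,l_k\}$. Hence the hypothesis $l\theta\da\in\T_{SC}^{\args}(R)$ says precisely that $SC(R,l'\theta\da)$ holds for every $l'\in\args(l)$, which settles the case $s\in\args(l)$ outright, for then $s\theta\da$ is one of the $l_i\theta\da$. It remains to treat a safe subterm $s\in\safe_{\B}(l',FV(l))$ with $FV(l)\supseteq FV(s)$ for some $l'\in\args(l)$. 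As already observed after the definition, such an $s$ is of basic type, hence so is $s\theta\da$, and consequently $SC(R,s\theta\da)$ is by definition nothing but $SN(R,s\theta\da)$. The problem thus reduces to deriving $SN(R,s\theta\da)$ from $SC(R,l'\theta\da)$.

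The engine is the easy and standard fact that $SN$ is closed under subterms: if $SN(R,t)$ and $t\geq_{sub}t'$ then $SN(R,t')$, because any infinite reduction issuing from $t'$ lifts verbatim to one from $t$; the only point to check is that re-inserting an $\eta$-long $\beta$-normal term into an argument or $\lambda$-body position creates no fresh redex, which holds since in a normal form the symbol heading any subterm-context is a variable or a function symbol. Granting this, Lemma~\ref{lem:sc1}(5) turns $SC(R,l'\theta\da)$ into $SN(R,l'\theta\da)$, so it suffices to prove the purely combinatorial claim that $s\theta\da\in Sub(l'\theta\da)$.

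I would establish this claim by induction mirroring the recursive definition of $\safe_{\B}$, proving the slightly more general statement that for every $t\equiv\lambda\ol{x_p}.b(\ol{w_q})$ visited by the recursion and every $u\in\safe_{\B}(t,FV(l))$ with $FV(l)\supseteq FV(u)$ one has $u\theta\da\in Sub(t\theta\da)$. In the base case $u\equiv b(\ol{w_q})$; since $t\theta\da\equiv\lambda\ol{x_p}.(b(\ol{w_q}))\theta\da$, the term $u\theta\da$ is literally the body of $t\theta\da$, and the side condition $FV(l)\supseteq FV(u)$ guarantees that none of the $\ol{x_p}$ occur in $u$, so that $u\theta\da$ is a legitimate stand-alone term rather than an open one. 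In the inductive case $b\notin FV(l)$ and $u\in\safe_{\B}(w_i,FV(l))$ for some $i$; the induction hypothesis yields $u\theta\da\in Sub(w_i\theta\da)$, so it only remains to check $w_i\theta\da\in Sub(t\theta\da)$.

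This last point is the crux, and the step I expect to cost the most care. Because $\safe_{\B}$ descends into the arguments of $b$ only when $b\notin FV(l)$, the head $b$ is forced to be either a function symbol or a bound variable, and in either case $\theta$ leaves $b$ in head position so that normalization fires no redex at the root; thus $(b(\ol{w_q}))\theta\da\equiv b(w_1\theta\da,\ldots,w_q\theta\da)$ and $w_i\theta\da$ genuinely survives as one of its arguments, giving $w_i\theta\da\in Sub(t\theta\da)$. It is exactly the plain-function-passing restriction---never recursing beneath an applied higher-order free variable---that excludes the fatal situation in which substituting $\theta(b)$ for a free head $b$ would ignite a $\beta$-redex and dissolve the subterm under $\da$. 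The attendant $\lambda$-abstraction bookkeeping (the capture-avoiding side condition $FV(l)\supseteq FV(u)$, and the observation that bound variables pass through $\theta$ unchanged) is the genuinely new ingredient over the abstraction-free STRS development, and checking that $\eta$-long $\beta$-normalization acts locally at each descent is where the real work lies.
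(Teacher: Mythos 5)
Your proof is correct and follows essentially the same route as the paper's: the same case split ($s \in \args(l)$ trivial via $top(l) \in \Sigma$, then the $\safe_{\B}$ case reduced to $SN$ via basic types and Lemma~\ref{lem:sc1}(5)), and the same induction on the recursive definition of $\safe_{\B}$, hinging on the identical key point that a head $b \notin FV(l)$ is a function symbol or bound variable, hence untouched by $\theta$, so that $\da$ acts compositionally and no root $\beta$-redex fires. The only difference is cosmetic bookkeeping: you carry the invariant ``$u\theta\da \in Sub(t\theta\da)$'' and invoke closure of $SN$ under subterms once at the end, whereas the paper's induction propagates $SN$ directly (using the same subterm-closure fact implicitly at each step).
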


\begin{proof}
The case $s \in \args(l)$ is trivial
because $s\theta\da \in \args(l\theta\da)$
follows from $top(l) \in \Sigma$.
Suppose that $s \in \safe_{\B}(l', FV(l))$ and $FV(s) \subseteq FV(l)$
for some $l' \in \args(l)$.
Then
we have $SN(R, l' \theta \da)$
from Lemma \ref{lem:sc1}(5).
Since $type(s) \in \B$
from the definition of $\safe_{\B}$,
it suffices to show $SN(R, s \theta \da)$.
We prove by induction on definition of $\safe_{\B}$
that 
$s \in \safe_{\B}(t, FV(l))$ and $SN(R, t \theta \da)$
implies $SN(R, s \theta \da)$,
for all
$t \equiv \lambda x_1 \cdots x_m. a(t_1, \ldots, t_n) \in Sub(l')$.

The case $t \equiv \lambda x_{1} \cdots x_{m}.s$ is trivial
because 
$t \theta \da \equiv \lambda x_{1} \cdots x_{m}.(s \theta \da)$.
Suppose that $s \in safe_{\B}(t_{j}, FV(l))$ for some $j$.
Without loss of generality,
we can assume that $a \notin Dom(\theta)$ because $a \notin FV(l)$.
Then
$t \theta \da \equiv \lambda \ol{x_m}. a(\ol{t_n\theta\da})$.
Hence, $SN(R, t_{j} \theta \da)$ holds.
From the induction hypothesis, we have $SN(R, s \theta \da)$.
\qed
\end{proof}

\begin{definition}[Plain Function-Passing]
An HRS $R$ is said to be
{\em plain function-passing} (PFP)
if for any $l \to r \in R$ and
$Z(r_1,\ldots,r_n) \in Sub(r)$ such that $Z \in FV(r)$,
there exists $k$ ($\leq n$)
such that $Z(r_1,\ldots,r_k)\da \in \safe(l)$.
We often abbreviate
plain function-passing HRS to PFP-HRS.
\end{definition}

\begin{example}
Referencing to Example \ref{ex:pfp1}.
Since $F\da \equiv \lambda xy.F(x,y) \in \safe(l)$,
HRS $R_{\sym{foldl}}$ is PFP.
\end{example}

\begin{example}
Let $R$ be the following non-terminating HRS:
\[\left\{\begin{array}{l}
  \sym{foo}(\sym{bar}(\lambda x.F(x))) \to F(\sym{bar}(\lambda x.F(x)))
\end{array}\right.\]
Then $R$ is not PFP because:
\[F\da \notin \{\sym{bar}(\lambda x.F(x))\}
  = \safe(\sym{foo}(\sym{bar}(\lambda x.F(x)))).\]
\end{example}

\begin{example}
Let $R$ be the following terminating HRS:
\[\left\{\begin{array}{l}
  \sym{mapfun}(\sym{nil_F}, X) \to \sym{nil} \\
  \sym{mapfun}(\sym{cons_F}(\lambda x. F(x), L), X) \\
  ~~~~~~~~~~~~\to \sym{cons}(F(X), \sym{mapfun}(L, X))
\end{array}\right.\]
Then $R$ is not PFP because:
\begin{eqnarray*}
  F\da
    &\notin& \{ \sym{cons_F}(\lambda x.F(x), L), L, X \} \\
    &=& \safe(\sym{mapfun}(\sym{cons_F}(\lambda x.F(x), L), X))
\end{eqnarray*}
\end{example}

In any PFP-HRS $R$,
for any subterm $Z(r_1,\ldots,r_n)$ headed by a higher-order variable
in the right hand side of a rule $l \to r$,
there exists a prefix $Z(r_1,\ldots,r_k)$
such that $Z(r_1,\ldots,r_k)\da \in \safe(l)$.
Thanks to Lemmas \ref{lem:sc1}(1) and \ref{lem:pfp},
this property guarantees that
$Z(r_1,\ldots,r_n)\theta\da$ is strongly computable
whenever $l \theta \da \in \T^{\args}_{SC}(R)$
and $r_i\theta\da \in \T_{SC}(R)$ ($i=1,\ldots,n$).
This beneficial property
eliminates a dependency analysis through higher-order variables
from static recursive structure analysis
(cf. Lemma \ref{lem:dp2}),
and contributes in obtaining the soundness
of the static dependency pair method (cf. Theorem \ref{th:SDP1}).

In the definition of PFP,
the case $n=0$ must be considered.
That is, any first-order variable in $Var(r)$
should belong to $\safe(l)$.
Otherwise Lemma \ref{lem:pfp} does not hold.
For example,
consider the HRS $R = \{\sym{foo}(F(X)) \to X\}$
and the substitution $\theta = \{F := \lambda x. 0\}$.
Then $X$ does not occur
in $\sym{foo}(0) \equiv \sym{foo}(F(X))\theta\da$,
and we must exclude $R$ from plain function-passing.

Note that every first-order rewrite system is plain function-passing.

A termination condition for higher-order rewrite rules
having a specific form of plain function-passing was investigated
under Jouannaud and Okada's general schema \cite{JO91,JO97}.
The restriction that higher-order variables occur as arguments
is weakened by using the notion of computability closure
\cite{BJO02,B06,B07}.
We leave a similar extension of
the present work with computability closure
for the future.

\section{Static Dependency Pair Method}\label{sec:SDP}

In this section
we present the static dependency pair method for PFP-HRSs.
The recursive structures derived by the static dependency pair method
accord with a programmer's intuition.
Since many existing programs are written
so as to terminate,
this method is of benefit in proving that they do indeed terminate.

First,
we describe candidate terms,
improving on the notion of candidate terms in \cite{SWS01}.
Candidate terms are a variant of subterms,
and bound variables never become free in candidate terms.
This feature is useful for showing the soundness of our method
(cf. Lemma \ref{lem:dp2}).

\begin{definition}[Candidate Term]\label{def:cand}
The set of {\em candidate terms}
of $t \equiv \lambda \ol{x_m}.$ $a(\ol{t_n})$,
denoted by $Cand(t)$,
is defined as follows:
\[Cand(t) =
  \{ t \} \cup \bigcup^{n}_{i=1} Cand(\lambda x_1 \cdots x_m. t_i) \]
\end{definition}

We consider the case of $\sym{foo},\sym{bar} \in \D_R$
and $t \equiv \lambda x. \sym{foo}(\sym{bar},x)$.
Then we have
\[Cand(t) =
  \{ \lambda x. \sym{foo}(\sym{bar},x), \lambda x. \sym{bar}, \lambda x. x \}.\]
Note that 
the definition in \cite{SWS01} gave
\(Cand(t) = \{ \sym{foo}(\sym{bar},c_x), \sym{bar} \}\),
where $c_x$ is a fresh constant corresponding to the bound variable $x$.

Next, we introduce the notion of static dependency pairs
by using candidate terms.
This notion forms the basis for the static dependency pair method.

\begin{definition}[Static Dependency Pair]
Let $R$ be an HRS.
A pair $\pair{l^{\sharp},$ $a^{\sharp}(r_{1}, \ldots, r_{n})}$,
denoted by $l^{\sharp} \to a^{\sharp}(r_{1}, \ldots, r_{n})$,
is said to be a {\em static dependency pair} in $R$
if there exists $l \to r \in R$ such that
\begin{itemize}
\item
  $\lambda x_1 \cdots x_m. a(r_1, \ldots, r_n) \in Cand(r)$,
\item
  $a \in \D_R$, and
\item
  $a(r_{1}, \ldots, r_{k}) \da$ $\notin \safe(l)$
  for all $k$ $(\leq n)$.
\end{itemize}
We denote by $SDP(R)$ the set of static dependency pairs in $R$.
\end{definition}

Notice that static dependency pairs have
no terms headed by a higher-order variable
nor terms of a functional type.

\begin{example}\label{ex:sdp}
For the HRS $R_{\sym{sqsum}}$ displayed in the introduction,
the set $SDP(R_{\sym{sqsum}})$ consists of the following seven pairs:
\[\left\{\begin{array}{l}
  \sym{foldl}^\sharp(\lambda xy.F(x,y), X, \sym{cons}(Y,L))\\
    \hspace*{60pt}\to \sym{foldl}^\sharp(\lambda xy.F(x,y), F(X,Y), L) \\
  \sym{add}^\sharp(\sym{s}(X),Y) \to \sym{add}^\sharp(X,Y) \\
  \sym{mul}^\sharp(\sym{s}(X),Y) \to \sym{add}^\sharp(\sym{mul}(X,Y),Y) \\
  \sym{mul}^\sharp(\sym{s}(X),Y) \to \sym{mul}^\sharp(X,Y)        \\
  \sym{sqsum}^\sharp(L)
    \to \sym{foldl}^\sharp(\lambda xy.\sym{add}(x,\sym{mul}(y,y)), 0, L) \\
  \sym{sqsum}^\sharp(L) \to \sym{add}^\sharp(x,\sym{mul}(y,y)) \\
  \sym{sqsum}^\sharp(L) \to \sym{mul}^\sharp(y,y)
\end{array}\right.\]
Notice that
we use the extra variables $x,y$
in the sixth and seventh dependency pairs.
\end{example}

Each static dependency pair expresses
nothing but the local dependency of functions
based on dependency relationships displayed in rules.
To analyze the global dependency of functions,
in other words,
to analyze the static recursive structure,
we introduce notions of
a static dependency chain and a static dependency graph.

\begin{definition}[Static Dependency Chain]
Let $R$ be an HRS.
A sequence
$u_0^\sharp \to v_0^\sharp, u_1^\sharp \to v_1^\sharp, \cdots$
of static dependency pairs in $R$
is said to be a {\em static dependency chain in $R$}
if there exist $\theta_0,\theta_1,\ldots$
such that
$v_i^\sharp\theta_i\da \red{R}{*} u_{i+1}^\sharp\theta_{i+1}\da$
and $u_i\theta_i\da, v_i\theta_i\da \in \T_{SC}^{\args}(R)$
for any $i$.
\end{definition}

\begin{definition}[Static Dependency Graph]
The \\
{\em static dependency graph of $R$} is a directed graph,
in which nodes are $SDP(R)$
and there exists an arc 
from $u^\sharp \to v^\sharp$ to $u'^\sharp \to v'^\sharp$
if $u^\sharp \to v^\sharp, u'^\sharp \to v'^\sharp$
is a static dependency chain.
\end{definition}

\begin{example}\label{ex:sdg}
The static dependency graph
of the HRS $R_{\sym{sqsum}}$ (cf. Example \ref{ex:sdp})
is shown in Fig. \ref{fig:HRSsqsum}.
\end{example}

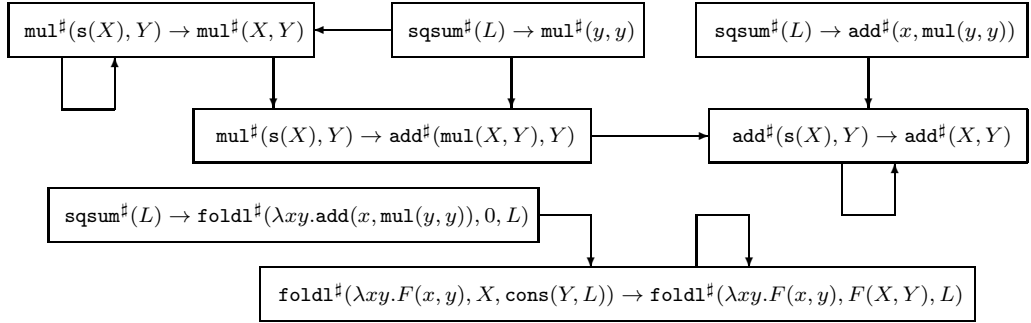
\begin{figure*}[htb]
\begin{center}
\begin{picture}(385,120)(0,0)

\put(0,100){\framebox(115,20){
  $\sym{mul}^\sharp(\sym{s}(X),Y) \to \sym{mul}^\sharp(X,Y)$}}
\put(20,100){\line(0,-1){20}}
\put(20,80){\line(1,0){20}}
\put(40,80){\vector(0,1){20}}
\put(100,100){\vector(0,-1){20}}
\put(145,100){\framebox(95,20){
  $\sym{sqsum}^\sharp(L) \to \sym{mul}^\sharp(y,y)$}}
\put(145,110){\vector(-1,0){30}}
\put(190,100){\vector(0,-1){20}}
\put(260,100){\framebox(125,20){
  $\sym{sqsum}^\sharp(L) \to \sym{add}^\sharp(x,\sym{mul}(y,y))$}}
\put(325,100){\vector(0,-1){20}}

\put(70,60){\framebox(150,20){
  $\sym{mul}^\sharp(\sym{s}(X),Y) \to \sym{add}^\sharp(\sym{mul}(X,Y),Y)$}}
\put(220,70){\vector(1,0){45}}
\put(265,60){\framebox(120,20){
  $\sym{add}^\sharp(\sym{s}(X),Y) \to \sym{add}^\sharp(X,Y)$}}
\put(315,60){\line(0,-1){20}}
\put(315,40){\line(1,0){20}}
\put(335,40){\vector(0,1){20}}

\put(15,30){\framebox(185,20){
  $\sym{sqsum}^\sharp(L)
    \to \sym{foldl}^\sharp(\lambda xy.\sym{add}(x,\sym{mul}(y,y)), 0, L)$}}
\put(200,40){\line(1,0){20}}
\put(220,40){\vector(0,-1){20}}

\put(95,0){\framebox(270,20){
  $\sym{foldl}^\sharp(\lambda xy.F(x,y),X,\sym{cons}(Y,L))
     \to \sym{foldl}^\sharp(\lambda xy.F(x,y),F(X,Y),L)$}}
\put(260,20){\line(0,1){20}}
\put(260,40){\line(1,0){20}}
\put(280,40){\vector(0,-1){20}}

\end{picture}
\end{center}
\caption{static dependency graph of $R_{\sym{sqsum}}$}\label{fig:HRSsqsum}
\end{figure*}

Unfortunately,
the connectability of the static dependency pairs is undecidable.
Hence, we need suitable approximation techniques.
In TRSs,
such techniques were studied \cite{M02}.
One of simple approximated dependency graphs
is the graph in which 
an arc
from $u^\sharp \to v^\sharp$ to $u'^\sharp \to v'^\sharp$
exists 
if $v^\sharp$ and $u'^\sharp$ have the same top symbol.
Note that
for the HRS $R_{\sym{sqsum}}$
this approximation gives the precise static dependency graph
shown in Fig.~\ref{fig:HRSsqsum}.

We now introduce the notions of
static recursion components and non-loopingness.
As usual,
the termination of HRS can be proved
by proving the non-loopingness of each recursion component.
These proofs are similar to the other dependency pair methods.

\begin{definition}[Static Recursion Component]\ \\
Let $R$ be an HRS.
A {\em static recursion component} in $R$
is a set of nodes in a strongly connected subgraph
of the static dependency graph of $R$.
Using $SRC(R)$ we denote the set of static recursion components in $R$.
\end{definition}

\begin{example}\label{ex:src}
The static dependency graph of $R_{\sym{sqsum}}$ (Fig. \ref{fig:HRSsqsum})
has three strongly connected subgraphs.
Thus, the set $SRC(R_{\sym{sqsum}})$ consists of the following three components:
\[\begin{array}{l}
  \left\{\begin{array}{l}
    \sym{foldl}^\sharp(\lambda xy.F(x,y), X, \sym{cons}(Y,L)) \\
    \hspace*{60pt}\to \sym{foldl}^\sharp(\lambda xy.F(x,y), F(X,Y), L)
  \end{array}\right.\\
  \left\{\begin{array}{l}
    \sym{add}^\sharp(\sym{s}(X),Y) \to \sym{add}^\sharp(X,Y)
  \end{array}\right.\\
  \left\{\begin{array}{l}
    \sym{mul}^\sharp(\sym{s}(X),Y) \to \sym{mul}^\sharp(X,Y)
  \end{array}\right.
\end{array}\]
\end{example}

\begin{definition}[Non-Looping]
A static recursion component $C$ in an HRS $R$
is said to be {\em non-looping}
if there exists no infinite static dependency chain
in which only pairs in $C$ occur
and every $u^\sharp \to v^\sharp \in C$ occurs infinitely many times.
\end{definition}

In the remainder of this section,
we show the soundness of the static dependency pair method on PFP-HRSs.
That is, we show that
if any static recursion component of PFP-HRS $R$ is non-looping,
then $R$ is terminating.
We need two lemmas.

\begin{lemma}\label{lem:dp1}
Let $R$ be a non-terminating HRS.
Then $\T_{\B} \cap \T_{\neg SC}(R) \cap \T^{\args}_{SC}(R) \neq \emptyset$.
\end{lemma}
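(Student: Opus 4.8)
The plan is to produce the required term as a \emph{minimal} non-strongly-computable term with respect to the subterm order. First I would observe that non-termination of $R$ yields a term $s$ with $\neg SN(R,s)$, and hence $\neg SC(R,s)$ by the contrapositive of Lemma \ref{lem:sc1}(5); thus $\T_{\neg SC}(R) \neq \emptyset$. Since terms are finite, the subterm order $>_{sub}$ is well founded, so $\T_{\neg SC}(R)$ contains a term $t$ that is minimal, i.e.\ $SC(R,u)$ holds for every $u$ with $t >_{sub} u$.

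The first payoff of minimality is that $t$ automatically satisfies the argument condition. Writing $t \equiv \lambda \ol{x_m}. a(\ol{t_n})$, each argument $t_i$ is a proper subterm ($t >_{sub} t_i$), so $SC(R,t_i)$ holds; hence $t \in \T^{\args}_{SC}(R)$, and already $t \in \T_{\neg SC}(R) \cap \T^{\args}_{SC}(R)$. It therefore only remains to force $t$ to be of basic type.

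If $type(t) \in \B$ we are done. Otherwise $t$ has a functional type $\alpha_1 \to \cdots \to \alpha_m \to \beta$ with $\beta \in \B$, and since $\neg SC(R,t)$, Lemma \ref{lem:sc1}(2) supplies strongly computable terms $u_i^{\alpha_i}$ with $\neg SC(R,(t\,u_1\,\cdots\,u_m)\da)$; the term $(t\,u_1\,\cdots\,u_m)\da$ is then a non-strongly-computable term of the basic type $\beta$. I would finish by checking that this basic-type term still meets the argument condition: after distinguishing whether the head $a$ is one of the bound variables $x_i$, its arguments are obtained by $\beta$-normalising the strongly computable witnesses $u_i$ into the already strongly computable arguments $t_i$, so one expects them to remain strongly computable.

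The main obstacle is exactly this last check. Establishing that the arguments of $(t\,u_1\,\cdots\,u_m)\da$ are strongly computable amounts to a substitution property for $SC$ — that strong computability is preserved when strongly computable terms are substituted into a strongly computable term and the result is $\beta$-normalised — which strengthens Lemma \ref{lem:sc1}(1) and does not follow immediately from the listed parts. I would either prove the required instance of this substitution property directly, or sidestep it by iterating the minimal-element selection: use Lemma \ref{lem:sc1}(2) to pass to a basic-type non-strongly-computable term, then re-extract a subterm-minimal non-strongly-computable subterm, governing the iteration by a well-founded measure on the type so that the construction terminates at a minimal term of basic type.
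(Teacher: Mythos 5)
The obstacle you flag at the end is not a routine check that can be deferred: it is exactly where your approach breaks, and it cannot be repaired in either of the two ways you suggest. The implication you need --- $SC(R,t_i)$ and $SC(R,u_j)$ for all $j$ imply $SC(R,t_i\{x_1:=u_1,\ldots,x_m:=u_m\}\da)$ --- is \emph{false} for non-terminating HRSs, and non-terminating HRSs are precisely the subject of the lemma. Concretely, let $R=\{\sym{f}(\sym{c})\to\sym{f}(\sym{c})\}$, where $\sym{f}$ and a constructor $\sym{g}$ have type $b\to b$ and $\sym{c}$ has basic type $b$. The term $t\equiv\lambda x.\sym{g}(\sym{f}(x))$ is minimal in $\T_{\neg SC}(R)$ with respect to $>_{sub}$: its proper subterms $\sym{g}(\sym{f}(x))$, $\sym{f}(x)$ and $x$ are normal forms of basic type (no subterm of them is an instance of $\sym{f}(\sym{c})$), hence strongly computable, while $(t\,\sym{c})\da\equiv\sym{g}(\sym{f}(\sym{c}))$ is non-terminating, so $t$ is not strongly computable. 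Running your construction with the strongly computable witness $u\equiv\sym{c}$ yields $t'\equiv\sym{g}(\sym{f}(\sym{c}))$, whose argument $\sym{f}(\sym{c})$ is \emph{not} strongly computable: $t'\notin\T^{\args}_{SC}(R)$, even though $\sym{f}(x)$ and $\sym{c}$ are both strongly computable. So your first repair (prove the substitution property) is impossible. Your second repair (re-extract a subterm-minimal non-SC subterm of $t'$ and iterate) happens to succeed here in one more step, landing on $\sym{f}(\sym{c})$, but its termination is unjustified in general: the next minimal non-SC subterm may lie inside the substituted witnesses $u_j$ --- subterms of strongly computable terms need not be strongly computable --- so neither its type nor its size is controlled by those of $t$, and no well-founded ``measure on the type'' presents itself.

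The paper's proof sidesteps all of this by taking $t$ minimal in $\T_{\neg SC}(R)$ with respect to \emph{size}, not the subterm order. The gain is that the abstracted arguments $t_i'\equiv\lambda x_1\cdots x_m.t_i$, which are not subterms of $t$ but do have strictly smaller size, are strongly computable by minimality; since $t_i\sigma\da\equiv(t_i'\,u_1\,\cdots\,u_m)\da$ with $\sigma=\{x_j:=u_j\}$, Lemma \ref{lem:sc1}(1) then gives $SC(R,t_i\sigma\da)$ directly --- the substitution property is obtained for free in the single instance where it is needed, because there the substitution is realised as application of a strongly computable abstraction to strongly computable arguments. (The remaining case $a\in\{x_1,\ldots,x_m\}$ is refuted the same way: $a\sigma\da\equiv u_j$ is strongly computable, so Lemma \ref{lem:sc1}(1) would make $t'$ strongly computable, a contradiction.) Subterm-minimality cannot see the terms $\lambda x_1\cdots x_m.t_i$ at all, and that is precisely why your argument has no way to conclude.
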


\begin{proof}
Since $R$ is not terminating,
$\T_{\neg\SC}(R) \neq \emptyset$
follows from Lemma \ref{lem:sc1}(5).
Let $t \equiv \lambda x_{1} \cdots x_{m}.a(t_{1}, \ldots, t_{n})$
be a minimal size term in $\T_{\neg\SC}(R)$.
From Lemma \ref{lem:sc1}(2),
there exist $u_{1}, \ldots, u_{m} \in \T_{SC}(R)$
such that $\neg SC(R, t')$
where $t' \equiv (t~u_1~\cdots~u_{m}) \da$.
Suppose that 
$\sigma = \{ x_{j} := u_{j} \mid 1 \leq j \leq m \}$.
Then
$t' \equiv (a\sigma~t_1\sigma~\cdots~t_{n}\sigma) \da$.
Since the size of $t'_{i} \equiv \lambda x_{1} \cdots x_{m}.t_{i}$
is less than the size of $t$,
we have $SC(R, t'_i)$ by the minimality of $t$.
Since
$t_{i} \sigma \da \equiv (t'_{i}~u_{1}~\cdots~u_{m}) \da$,
we have $SC(R, t_{i} \sigma \da)$ by Lemma \ref{lem:sc1}(1).
Assume that $a \in \{ x_{1}, \ldots, x_{m} \}$.
Since $a \sigma \da \equiv u_{j} \in \T_{SC}(R)$,
$SC(R, t')$ follows from Lemma \ref{lem:sc1}(1).
This is a contradiction.
Hence, we have $a \notin \{ x_{1}, \ldots, x_{m} \}$.
Therefore we have
$t' \equiv a(t_{1} \sigma \da, \ldots, t_{n} \sigma \da)$
  $\in \T_{\B} \cap \T_{\neg SC}(R) \cap \T^{\args}_{SC}(R)$.
\qed
\end{proof}

\begin{lemma}\label{lem:dp2}
Let $R$ be a PFP-HRS.
For any
$t \in \T_{\B} \cap \T_{\neg SC}(R) \cap \T^{\args}_{SC}(R)$,
there exist
$l^\sharp \to v^\sharp \in SDP(R)$ and a substitution $\theta$
such that
$t^\sharp \red{R}{*} (l \theta \da)^\sharp$
and
$l\theta\da, v\theta\da
  \in \T_{\B} \cap \T_{\neg SC}(R) \cap \T^{\args}_{SC}(R)$.
\end{lemma}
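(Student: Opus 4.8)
The plan is to take a term $t \in \T_{\B} \cap \T_{\neg SC}(R) \cap \T^{\args}_{SC}(R)$ and trace a reduction from $t^\sharp$ to a marked instance of some left-hand side, while carefully maintaining the invariant that non-strong-computability is preserved in the base type and that all arguments stay strongly computable. First I would observe that since $t$ is of basic type, $\neg SC(R,t)$ means $\neg SN(R,t)$, so there is an infinite reduction sequence starting from $t$. Because every argument of $t$ is strongly computable, hence terminating by Lemma \ref{lem:sc1}(5), the infinite reduction cannot stay forever inside the arguments; it must eventually apply a rewrite step at the root. The idea is to reduce the arguments of $t$ to normal forms (or at least enough so that the first root step becomes visible), so that $t \red{R}{*} t'$ where $t' \equiv a(\ol{s_n})$ with each $s_i$ still strongly computable by Lemma \ref{lem:sc1}(3), and where a rule $l \to r \in R$ applies at the root: $t' \equiv l\theta\da$ for some substitution $\theta$ and rule whose top symbol is $a \in \D_R$.

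Next I would analyze the redex contractum $r\theta\da$. Since $t' \equiv l\theta\da$ has all arguments strongly computable, we have $l\theta\da \in \T^{\args}_{SC}(R)$. Crucially, $r\theta\da$ cannot be strongly computable: if it were, then $t' \red{R}{} r\theta\da$ together with $SC(R, r\theta\da)$ would force $SC(R,t')$ by preservation under reduction — but this reasoning goes the wrong way, so instead I would argue that if every candidate subterm instance of $r$ were strongly computable, then $r\theta\da$ itself would be strongly computable, contradicting $\neg SC(R, t') = \neg SC(R, l\theta\da)$ combined with the fact that reduction preserves $SC$ downward (Lemma \ref{lem:sc1}(3)) forces $\neg SC(R, r\theta\da)$. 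So I would descend into $r\theta\da$ looking for a minimal non-strongly-computable piece. The key structural tool is that $r\theta\da$ is built from applications, $\lambda$-abstractions, the instances $r_i\theta\da$ of subterms, and instances of free variables of $r$. By Lemma \ref{lem:sc1}(1), an application of strongly computable terms is strongly computable, so the non-computability must be localized at a subterm of the form $a(r_1,\ldots,r_n)$ whose head $a$ is \emph{not} a higher-order variable — because any subterm headed by a higher-order free variable $Z$ is strongly computable by the PFP property: by plain function-passing some prefix $Z(r_1,\ldots,r_k)\da \in \safe(l)$, which is strongly computable by Lemma \ref{lem:pfp} (using $l\theta\da \in \T^{\args}_{SC}(R)$), and then the full application is strongly computable by Lemma \ref{lem:sc1}(1).

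This PFP-driven elimination of higher-order-variable-headed subterms is exactly where I expect the main difficulty to lie. The delicate point is matching the candidate-term machinery (Definition \ref{def:cand}, which keeps bound variables bound rather than replacing them with fresh constants) against the $\lambda$-prefix structure of $r\theta\da$: I would need to argue that descending to the offending $a(\ol{r_k})$ corresponds exactly to selecting a candidate $\lambda \ol{x_m}.a(\ol{r_k}) \in Cand(r)$ with $a \in \D_R$, and that the condition $a(r_1,\ldots,r_k)\da \notin \safe(l)$ for all $k$ holds — precisely the third clause in the definition of a static dependency pair — for otherwise that subterm would again be strongly computable via Lemma \ref{lem:pfp}. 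Once the right candidate is pinned down, I would set $v \equiv a(\ol{r_k})$, so that $l^\sharp \to v^\sharp \in SDP(R)$, and take a substitution extending $\theta$ to handle the extra variables arising from the bound $\ol{x_m}$ (as foreshadowed by the ``extra variables'' remark in Example \ref{ex:sdp}); instantiating those by strongly computable closed terms via Lemma \ref{lem:sc1}(4) keeps $v\theta\da \in \T^{\args}_{SC}(R)$ while $\neg SC(R, v\theta\da)$ persists. Finally, marking: since $a \in \D_R$ and the root symbol of $t'$ is the defined symbol being rewritten, $t^\sharp \red{R}{*} (t')^\sharp = (l\theta\da)^\sharp$, giving the required chain; and $v\theta\da$ being of basic type (static dependency pairs carry no functional-type right-hand sides) places it in $\T_{\B} \cap \T_{\neg SC}(R) \cap \T^{\args}_{SC}(R)$, completing the argument.
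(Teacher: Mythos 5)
Your overall plan does coincide with the paper's proof: follow an infinite reduction out of $t$ to its first root step, then locate a minimal non-strongly-computable candidate of the right-hand side and use Lemma \ref{lem:pfp} together with Lemma \ref{lem:sc1}(1) to force its head to be a defined symbol with no prefix in $\safe(l)$. But two steps, as you have written them, are genuinely broken. The first is your derivation of $\neg SC(R, r\theta\da)$, and you half-acknowledge the problem yourself. Lemma \ref{lem:sc1}(3) propagates strong computability \emph{forward} along reductions, equivalently non-computability \emph{backward}; it can never yield $\neg SC(R, r\theta\da)$ from $\neg SC(R, l\theta\da)$. The paper does not reason about $SC$ at this point at all: it works with non-termination. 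Because the root step is taken to be a step \emph{of the chosen infinite reduction}, that infinite reduction continues from the contractum, so $r\theta'\da \in \T_{\neg SN}(R)$ outright; and since $type(r) \in \B$, non-termination coincides with $\neg SC$ there. Your construction (``reduce the arguments until a rule applies at the root'') severs this link: an arbitrary root redex at $t'$ tells you nothing about the status of its contractum, so your descent into $r\theta\da$ never gets started.

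The second gap is your handling of the $\lambda$-prefix. Having fixed a candidate $\lambda\ol{x_m}.a(\ol{r_n})$ whose instance is not strongly computable, you instantiate $x_1,\dots,x_m$ by $\eta$-long variables via Lemma \ref{lem:sc1}(4) and assert that non-computability ``persists''; nothing in the paper supports that, since substituting particular strongly computable terms need not preserve $\neg SC$. The paper instead applies Lemma \ref{lem:sc1}(2), which \emph{existentially} supplies strongly computable terms $u_1,\dots,u_m$ such that $(v'\theta'\,u_1\cdots u_m)\da$ is not strongly computable, and defines $\theta = \theta' \cup \{x_i := u_i\}$ from those witnesses. Moreover, you never actually establish $v\theta\da \in \T^{\args}_{SC}(R)$: in the paper this is precisely where \emph{minimality} of the candidate is used (each $\lambda\ol{x_m}.r_i$ is a strictly smaller candidate, hence its instance is strongly computable, and Lemma \ref{lem:sc1}(1) transfers this to $r_i\theta\da$). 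Finally, your case analysis on the head $a$ covers only free higher-order variables and safe prefixes; it omits the constructor case $a \in \C_R$ (handled in the paper via termination of the arguments plus basic type) and the bound-variable case $a \in \{x_1,\dots,x_m\}$, neither of which follows from the PFP argument you give.
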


\begin{proof}
From $t \in \T^{\args}_{SC}(R)$ and Lemma \ref{lem:sc1}(5),
we have $t \in \T^{\args}_{SN}(R)$.
From $t \in \T_{\B} \cap \T_{\neg SC}(R)$,
we have $\neg SN(R, t)$.
Hence, there exist $l \to r \in R$ and a substitution $\theta'$
such that
$t^\sharp \red{R}{*} (l \theta' \da)^\sharp$,
$l \theta' \da, r \theta' \da \in \T_{\neg SN}(R)$,
and $Dom(\theta') \subseteq FV(l)$.
Since $type(l)=type(r) \in \B$,
we have
$l \theta' \da, r \theta' \da \in \T_{\neg SC}(R)$.
Moreover,
$l \theta' \da \in \T^{\args}_{SC}(R)$
follows from Lemma \ref{lem:sc1}(3).
Since $r \in Cand(r)$ and $\neg SC(R, r \theta' \da)$,
we have 
$\{r' \in Cand(r) \mid \neg SC(R, r' \theta' \da)\} \neq \emptyset$.
Let $v' \equiv \lambda x_{1} \cdots x_{m}. a(r_{1}, \ldots, r_{n})$
be a minimal size term in this set.

From Lemma \ref{lem:sc1}(2),
there exist strongly computable terms $u_1,\ldots,u_m$
such that $(v'\theta'~u_1~\cdots~u_m) \da$
is not strongly computable.
Let $v$ and $\theta$ be
$v \equiv a(r_{1}, \ldots, r_{n})$
and 
$\theta = \theta' \cup \{ x_{i} := u_{i} \mid 1 \leq i \leq m \}$.
Since
$v\theta\da \equiv (v'\theta'~u_1~\cdots~u_m) \da$,
we have $v \theta \da \in \T_{\B} \cap \T_{\neg SC}(R)$.
Since
$l \theta \da \equiv l \theta' \da$ from $x_{i} \notin FV(l)$,
we have
$l \theta \da \in \T_{\B} \cap \T_{\neg SC}(R) \cap \T^{\args}_{SC}(R)$.
Since $\lambda x_1 \cdots x_m. r_i \in Cand(r)$,
$SC(R, (\lambda x_1 \cdots x_m. r_i)\theta'\da)$
follows from the minimality of $v'$.
Hence, each
$r_i\theta\da \equiv ((\lambda x_1 \cdots x_m. r_i)\theta'~u_1~\cdots~u_m)\da$
is strongly computable
from Lemma \ref{lem:sc1}(1).

We prove the remaining claims that
$v \theta \da \in \T^{\args}_{SC}(R)$
and $l^\sharp \to v^\sharp \in SDP(R)$.
\begin{itemize}
\item
  Assume that $a \in \{ x_{i} \mid 1 \leq i \leq m \}$.
  Then $SC(R, v \theta \da)$
  follows from $SC(R, a \theta \da)$ and Lemma \ref{lem:sc1}(1).
  This is a contradiction.
\item
  Assume that $a \in FV(r)$.
  Since $R$ is PFP,
  there exists $k$ $(\leq n)$ such that
  $a(r_{1}, \ldots, r_{k}) \da \in \safe(l)$.
  From Lemma \ref{lem:pfp},
  $SC(R, a(r_{1}, \ldots, r_{k}) \theta \da)$ holds.
  From Lemma \ref{lem:sc1}(1), $SC(R, v \theta \da)$ holds.
  This is a contradiction.
\item
  Assume that $a \in \C_{R}$.
  Then $\forall i. SN(R, r_{i} \theta \da)$
  follows from Lemma \ref{lem:sc1}(5).
  From $a \in \C_{R}$, we have $SN(R, v \theta \da)$.
  From $v \in \T_{\B}$, we have $SC(R, v \theta \da)$.
  This is a contradiction.
\item
  Assume that $a \in \D_R$
  and there exists $k$ $(\leq n)$ such that
  $a(r_{1}, \ldots, r_{k}) \da$ $\in \safe(l)$.
  From Lemma \ref{lem:pfp},
  $SC(R, a(r_{1}, \ldots, r_{k}) \theta \da)$ holds.
  From Lemma \ref{lem:sc1}(1), $SC(R, v \theta \da)$ holds.
  This is a contradiction.
\end{itemize}
As shown above,
we have
$a \in \D_R$ and
$a(r_{1}, \ldots, r_{k}) \da \notin \safe(l)$
for all $k$ $(\leq n)$.
Hence $l^\sharp \to v^\sharp \in SDP(R)$.
Moreover,
$v \theta \da \in \T^{\args}_{SC}(R)$ holds
because
$v \theta \da \equiv a(r_{1} \theta \da, \ldots, r_{n} \theta \da)$
and $SC(R, r_{i} \theta \da)$ for any $i$.
\qed
\end{proof}

By using the two lemmas above,
we can show the soundness of the static dependency pair method.

\begin{theorem}\label{th:SDP1}
Let $R$ be a PFP-HRS.
If there exists no infinite static dependency chain
then $R$ is terminating.
\end{theorem}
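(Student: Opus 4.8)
The plan is to prove the contrapositive: assume $R$ is non-terminating and construct an infinite static dependency chain. The two technical lemmas already set up the entire machinery, so the proof is essentially an inductive iteration of Lemma~\ref{lem:dp2}. First I would invoke Lemma~\ref{lem:dp1} on the non-terminating $R$ to obtain a seed term $t_0 \in \T_{\B} \cap \T_{\neg SC}(R) \cap \T^{\args}_{SC}(R)$; this nonempty intersection is precisely what guarantees we can enter the dependency-pair analysis at all.

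Next I would build the chain by repeated application of Lemma~\ref{lem:dp2}. Given a term $t_i$ in the intersection $\T_{\B} \cap \T_{\neg SC}(R) \cap \T^{\args}_{SC}(R)$, the lemma yields a static dependency pair $l_i^\sharp \to v_i^\sharp \in SDP(R)$ and a substitution $\theta_i$ with $t_i^\sharp \red{R}{*} (l_i\theta_i\da)^\sharp$ and, crucially, with both $l_i\theta_i\da$ and $v_i\theta_i\da$ again lying in the same intersection. I would then set $t_{i+1} \equiv v_i\theta_i\da$ as the next seed and iterate. This produces an infinite sequence of static dependency pairs $l_0^\sharp \to v_0^\sharp, l_1^\sharp \to v_1^\sharp, \ldots$ together with substitutions $\theta_0, \theta_1, \ldots$

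To confirm that this sequence is genuinely a static dependency chain, I would check the two defining conditions. The connecting condition $v_i^\sharp\theta_i\da \red{R}{*} u_{i+1}^\sharp\theta_{i+1}\da$ holds because, by construction, $u_{i+1}^\sharp \theta_{i+1}\da$ is $(l_{i+1}\theta_{i+1}\da)^\sharp$, and Lemma~\ref{lem:dp2} applied to the seed $t_{i+1} \equiv v_i\theta_i\da$ gives exactly $(v_i\theta_i\da)^\sharp \red{R}{*} (l_{i+1}\theta_{i+1}\da)^\sharp$. The computability side-condition $u_i\theta_i\da, v_i\theta_i\da \in \T_{SC}^{\args}(R)$ holds because each $l_i\theta_i\da$ and $v_i\theta_i\da$ was produced inside the intersection, which contains $\T^{\args}_{SC}(R)$.

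\textbf{The main obstacle} I anticipate is purely notational rather than mathematical: the chain definition labels pairs as $u_i^\sharp \to v_i^\sharp$ whereas Lemma~\ref{lem:dp2} produces them as $l_i^\sharp \to v_i^\sharp$, and one must track carefully that the marked term $t_i^\sharp$ reduces to $(l_i\theta_i\da)^\sharp$ so that the left-hand side of the $(i{+}1)$-st pair connects to the right-hand side of the $i$-th. Once the bookkeeping identifies $u_{i+1}$ with $l_{i+1}$ and the seed $t_{i+1}$ with $v_i\theta_i\da$, the construction is an immediate induction with no further estimates. Since the existence of this infinite chain contradicts the hypothesis, we conclude $R$ is terminating.
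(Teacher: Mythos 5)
Your proposal is correct and is essentially the paper's own proof: the paper also assumes $\neg SN(R)$, obtains a seed term from Lemma~\ref{lem:dp1}, and applies Lemma~\ref{lem:dp2} repeatedly to build an infinite static dependency chain, contradicting the hypothesis. Your write-up merely makes explicit the bookkeeping (taking $t_{i+1} \equiv v_i\theta_i\da$ as the next seed and identifying $v_i^\sharp\theta_i\da$ with $(v_i\theta_i\da)^\sharp$) that the paper leaves implicit; note only the small wording slip that the intersection is \emph{contained in} $\T^{\args}_{SC}(R)$, not the other way around.
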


\begin{proof}
Assume that $\neg SN(R)$.
From Lemma \ref{lem:dp1},
there exists 
$t \in \T_{\B}$ $\cap \T_{\neg SC}(R)$ $\cap \T^{\args}_{SC}(R)$.
By applying Lemma \ref{lem:dp2} repeatedly,
we obtain an infinite static dependency chain,
which leads to a contradiction.
\qed
\end{proof}

\begin{corollary}\label{co:SDP1}
Let $R$ be a PFP-HRS
such that there exists no infinite path\footnote{%
  Each node cannot appear more than once in a {\em path}.}
in the static dependency graph.
If all static recursion components are non-looping,
then $R$ is terminating.
\end{corollary}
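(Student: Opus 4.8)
The plan is to derive the corollary from Theorem~\ref{th:SDP1}: since that theorem already gives termination once there is no infinite static dependency chain, it suffices to show that the two hypotheses --- no infinite path in the static dependency graph, and non-loopingness of every static recursion component --- together exclude any infinite static dependency chain. So I would argue by contradiction, assuming an infinite static dependency chain $P_0, P_1, P_2, \ldots$ with $P_i = (u_i^\sharp \to v_i^\sharp)$ and witnessing substitutions $\theta_0, \theta_1, \ldots$, and then contradict one of the two hypotheses.

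First I would observe that the chain traces an infinite walk in the static dependency graph: for each $i$, the two-element subsequence $P_i, P_{i+1}$ is itself a static dependency chain, witnessed by the same $\theta_i, \theta_{i+1}$, so by definition there is an arc from $P_i$ to $P_{i+1}$. Because a path is required to have pairwise-distinct nodes and the graph admits no infinite path, this infinite walk cannot keep visiting fresh nodes forever; it must be eventually recurrent. I would capture this by letting $C$ be the set of pairs that occur infinitely often in the chain, noting $C \neq \emptyset$, and fixing an index $N$ beyond which every visited pair lies in $C$.

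Next I would show that $C$ is a static recursion component, i.e.\ that the subgraph induced by $C$ is strongly connected. Fixing $P, Q \in C$, I would use that both occur infinitely often in the tail $P_N, P_{N+1}, \ldots$: choosing an occurrence of $P$ and a later occurrence of $Q$, the intervening segment of the walk stays inside $C$ by the choice of $N$ and realizes a directed path from $P$ to $Q$ using only nodes of $C$; symmetrically one obtains a path from $Q$ to $P$. Hence every two nodes of $C$ are mutually reachable within $C$, so $C$ is a strongly connected subgraph and therefore a static recursion component.

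Finally, the tail $P_N, P_{N+1}, \ldots$ is itself an infinite static dependency chain, since a suffix of a chain is again a chain under the reindexed substitutions $\theta_N, \theta_{N+1}, \ldots$, and it is one in which only pairs of $C$ occur and every pair of $C$ occurs infinitely often. This is exactly a violation of the non-loopingness of the component $C$, contradicting the hypothesis and completing the argument. I expect the main obstacle to be the middle step: verifying that the infinitely-often-visited pairs genuinely form a strongly connected subgraph in the induced sense, with the witnessing paths lying inside $C$ rather than merely in the ambient graph, and pinning down $N$ so that the tail both stays in $C$ and exhibits each pair of $C$ infinitely often. The ``no infinite path'' hypothesis is what forces the walk into this eventually-recurrent regime in the first place, rather than escaping forever along ever-new nodes.
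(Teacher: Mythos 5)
Your overall route is the intended one: the paper gives no separate proof of this corollary, treating it as an immediate consequence of Theorem~\ref{th:SDP1}, and your reduction --- ``the two hypotheses exclude any infinite static dependency chain'' --- is exactly the right frame. Moreover, the steps you worried about are actually fine: once you have an index $N$ beyond which the chain visits only pairs of $C$ (the set of infinitely-often-occurring pairs), the mutual-reachability argument does produce arcs lying inside $C$, so $C$ is a static recursion component, and the tail is a chain confined to $C$ hitting every pair of $C$ infinitely often, contradicting non-loopingness. The genuine gap is the step you passed over quickly: from ``no infinite path (with pairwise distinct nodes)'' you infer that the walk ``cannot keep visiting fresh nodes forever'' and hence is eventually confined to $C$. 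That implication is false for infinite graphs. Consider the star graph with nodes $c,a_1,a_2,\ldots$ and arcs only $c\to a_i$ and $a_i\to c$: every path has at most three nodes, so there is no infinite path, yet the walk $c,a_1,c,a_2,c,a_3,\ldots$ visits fresh nodes forever; here $C=\{c\}$, but no tail of the walk lies in $C$. What the no-infinite-path hypothesis really gives is only $C\neq\emptyset$ (if every node occurred finitely often, taking last occurrences would yield an infinite path); the stronger ``eventual confinement'' needs the set of nodes to be \emph{finite}, in which case the finitely many transient pairs, each occurring finitely often, give your $N$.

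This is not a repairable presentation issue, because the no-infinite-path hypothesis only has content when $SDP(R)$ is infinite (the paper's footnote notes it is automatic for finite systems), and in that regime the star pattern is realizable: take the infinite first-order (hence PFP) system $R=\{f(x)\to g(x)\}\cup\{g(s^i(0))\to f(s^{i+1}(0))\mid i\ge 0\}$, with pairs $f^\sharp(x)\to g^\sharp(x)$ and $A_i\colon g^\sharp(s^i(0))\to f^\sharp(s^{i+1}(0))$. Its static dependency graph is precisely the star above (marked symbols never change under reduction, and the terms $s^k(0)$ are normal forms), so there is no infinite path; and in any chain the pair following $A_i$ must be $f^\sharp(x)\to g^\sharp(x)$ instantiated with $x:=s^{i+1}(0)$, followed by $A_{i+1}$, so each $A_i$ occurs at most once in any chain and every static recursion component is non-looping --- yet $R$ is not terminating ($f(0)\to g(0)\to f(s(0))\to g(s(0))\to\cdots$). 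So your argument (indeed, any argument) closes only under the additional assumption that $SDP(R)$ is finite; you should state that assumption and use finiteness, rather than the no-infinite-path condition, to justify the confinement step.
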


Note that no infinite path condition in this corollary
is always satisfied for finite PFP-HRSs,
since nodes are finite in the static dependency graph.

\section{Non-Loopingness}

In section \ref{sec:SDP}
we showed that a PFP-HRS terminates
if every static recursion component is non-looping.
In order to show non-loopingness,
the notion of the subterm criterion \cite{HM04,KS07} is frequently utilized,
as is that of a reduction pair \cite{KNT99},
which is an abstraction of the weak-reduction order\footnote{%
  A quasi-order $\gsim$ is said to be a {\em weak reduction order}
  if the pair $(\gsim,\gnsim)$
  of $\gsim$ and its strict part $\gnsim$
  is a reduction pair.}\cite{AG00}.
These techniques are also effective in termination proofs for HRSs.
We begin with reduction pairs.

\begin{definition}[Reduction Pair]
Let $\gsim$ be a quasi-order and $>$ be a strict order.
The pair $(\gsim,>)$ is said to be a {\em reduction pair}
if the following properties hold:
\begin{itemize}
\item
  $>$ is well-founded and closed under substitution,
\item
  $\gsim$ is closed under contexts and substitutions, and
\item
  $\mathord{\gsim} \cdot \mathord{>} \subseteq \mathord{>}$
  or $\mathord{>} \cdot \mathord{\gsim} \subseteq \mathord{>}$.
\end{itemize}
\end{definition}

\begin{lemma}\label{lem:RP}
Let $R$ be an HRS and $C \in SRC(R)$.
If there exists a reduction pair $(\gsim, >)$
such that 
$R \subseteq \mathord{\gsim}$,
$C \subseteq \mathord{\gsim} \cup \mathord{>}$,
and $C \cap \mathord{>} \neq \emptyset$,
then $C$ is non-looping.
\end{lemma}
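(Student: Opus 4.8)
The plan is to argue by contradiction: suppose $C$ is looping, extract the resulting infinite static dependency chain, and push it through the reduction pair $(\gsim, >)$ to contradict the well-foundedness of $>$. So first I would assume that $C$ is not non-looping. By the definition of non-looping, there is an infinite static dependency chain $u_0^\sharp \to v_0^\sharp, u_1^\sharp \to v_1^\sharp, \cdots$ in which only pairs of $C$ occur and every pair $u^\sharp \to v^\sharp \in C$ occurs infinitely many times. Unpacking the definition of static dependency chain, I obtain substitutions $\theta_0, \theta_1, \ldots$ with $v_i^\sharp \theta_i\da \red{R}{*} u_{i+1}^\sharp \theta_{i+1}\da$ for every $i$.

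Next I would convert each link of the chain into a relation step. For a fixed $i$, the pair $u_i^\sharp \to v_i^\sharp$ belongs to $C \subseteq \mathord{\gsim} \cup \mathord{>}$, so either $u_i^\sharp \gsim v_i^\sharp$ or $u_i^\sharp > v_i^\sharp$. Since $>$ and $\gsim$ are closed under substitution, applying $\theta_i\da$ (understood as a higher-order substitution followed by normalization, matching the way substitutions act elsewhere in the paper) gives $u_i^\sharp\theta_i\da \gsim v_i^\sharp\theta_i\da$ or $u_i^\sharp\theta_i\da > v_i^\sharp\theta_i\da$ respectively. For the connecting reduction $v_i^\sharp\theta_i\da \red{R}{*} u_{i+1}^\sharp\theta_{i+1}\da$, I use $R \subseteq \mathord{\gsim}$ together with the closure of $\gsim$ under contexts and substitutions to conclude $v_i^\sharp\theta_i\da \gsim u_{i+1}^\sharp\theta_{i+1}\da$; here I would remark that a single $R$-step is covered by context/substitution closure and that the reflexive-transitive closure $\red{R}{*}$ follows because $\gsim$, being a quasi-order, is reflexive and transitive. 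Splicing these together yields an infinite sequence
\[
  u_0^\sharp\theta_0\da \;\mathrel{R_0}\; v_0^\sharp\theta_0\da \;\gsim\;
  u_1^\sharp\theta_1\da \;\mathrel{R_1}\; v_1^\sharp\theta_1\da \;\gsim\; \cdots,
\]
where each $R_i$ is either $\gsim$ or $>$.

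Finally I would derive the contradiction. Because every pair of $C$ occurs infinitely often in the chain and $C \cap \mathord{>} \neq \emptyset$, there is a specific pair in $C$ lying in $>$ that is used infinitely many times; hence the relation $>$ appears at infinitely many of the steps $R_i$. Every other step is a $\gsim$ step. Using the compatibility condition $\mathord{\gsim}\cdot\mathord{>} \subseteq \mathord{>}$ (or $\mathord{>}\cdot\mathord{\gsim} \subseteq \mathord{>}$), any finite block consisting of $\gsim$ steps surrounding a single $>$ step collapses to a single $>$ step; iterating this absorption across the whole sequence produces an infinite strictly descending $>$-chain, contradicting the well-foundedness of $>$. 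The main obstacle is the bookkeeping in this last collapsing argument: I must be careful that the chosen form of the compatibility condition consistently lets $\gsim$ steps be absorbed on the correct side of a $>$ step, and that the infinitely-many-occurrences hypothesis genuinely guarantees infinitely many strict steps rather than merely one. Everything else is a routine application of the closure properties of the reduction pair.
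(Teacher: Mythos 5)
Your proposal is correct and takes the approach the authors intend: the paper's own proof of Lemma~\ref{lem:RP} is literally just ``Obvious,'' so your write-up is the standard argument filling in the omitted details. Each piece checks out---root steps from pairs in $C$ land in $\gsim$ or $>$ by substitution closure (correctly not requiring context closure of $>$), the connecting reductions land in $\gsim$ via $R \subseteq \mathord{\gsim}$ with context/substitution closure and the reflexivity/transitivity of the quasi-order, and the absorption of $\gsim$-blocks into strict steps works under either form of the compatibility condition, yielding the contradiction with well-foundedness.
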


\begin{proof}
Obvious.
\qed
\end{proof}

Next we introduce the subterm criterion for HRSs.
In \cite{HM04},
Hirokawa and Middeldorp proved that
the subterm criterion guarantees the non-loopingness in TRSs.
The key of the proof is that
the relation $\red{R}{} \cup >_{sub}$ is well-founded
on terminating terms.
Since the property also holds in higher-order rewriting,
we directly ported the criterion to STRSs \cite{KS07}.
We also slightly improved the subterm criterion
by extending the codomain of a function $\pi$
from positive integers
to sequences of positive integers \cite{KS07}.
In the following,
we extend the improved subterm criterion onto HRSs,
that is to handle $\lambda$-abstraction.

\begin{definition}[Subterm Criterion]
Let $R$ be an HRS and $C \in SRC(R)$.
We say that $C$ satisfies the {\em subterm criterion}
if there exists a function $\pi$
from $\D_R$ to non-empty sequences of positive integers such that
\begin{description}
\item{($\alpha$)}
  $u|_{\pi(top(u))} >_{sub} v|_{\pi(top(v))}$
  for some $u^\sharp \to v^\sharp \in C$, and
\item{($\beta$)}
  the following conditions hold for any $u^\sharp \to v^\sharp \in C$:
  \begin{itemize}
  \item
    $u|_{\pi(top(u))} \geq_{sub} v|_{\pi(top(v))}$,
  \item
    $\forall p \prec \pi(top(u)). top(u|_{p}) \notin FV(u)$, and
  \item
    $\forall q \prec \pi(top(v)). q = \varepsilon \lor top(v|_q) \notin FV(v) \cup \D_R$.
  \end{itemize}
\end{description}
\end{definition}

\begin{lemma}\label{lem:subterm}
Let $R$ be an HRS and $C \in SRC(R)$.
If $C$ satisfies the subterm criterion
then $C$ is non-looping.
\end{lemma}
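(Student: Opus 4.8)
The plan is to argue by contradiction, mirroring the classical subterm-criterion argument of Hirokawa and Middeldorp but accounting for $\lambda$-abstraction. Suppose that $C$ satisfies the subterm criterion via the function $\pi$, yet $C$ is looping. Then there exists an infinite static dependency chain
\[
u_0^\sharp \to v_0^\sharp,~ u_1^\sharp \to v_1^\sharp,~ \cdots
\]
in which every pair of $C$ occurs, with substitutions $\theta_i$ such that $v_i^\sharp\theta_i\da \red{R}{*} u_{i+1}^\sharp\theta_{i+1}\da$ and $u_i\theta_i\da, v_i\theta_i\da \in \T_{SC}^{\args}(R)$ for all $i$. The idea is to track, along this chain, the sequence of terms obtained by projecting each $u_i$ and $v_i$ at the positions selected by $\pi$, and to show that this sequence strictly decreases infinitely often under a well-founded relation, contradicting looping.

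The key step is to establish the following projection property at each link of the chain: writing $p_i = \pi(top(u_i))$ and $q_i = \pi(top(v_i))$, I would show that $(u_i|_{p_i})\theta_i\da$ and $(v_i|_{q_i})\theta_i\da$ are well-defined closed-enough terms and that condition $(\beta)$ forces
\[
(u_i|_{p_i})\theta_i\da \geq_{sub} (v_i|_{q_i})\theta_i\da,
\quad\text{with}\quad
(u_j|_{p_j})\theta_j\da >_{sub} (v_j|_{q_j})\theta_j\da
\]
whenever the $j$-th pair is the one satisfying $(\alpha)$. The two side conditions on prefixes in $(\beta)$ are exactly what makes the projection commute with substitution and $\eta$-long $\beta$-normalization: the requirement $top(u|_p) \notin FV(u)$ for all $p \prec \pi(top(u))$ guarantees that descending to position $\pi(top(u))$ in $u\theta\da$ does not pass through a subterm headed by a variable that $\theta$ could instantiate (which would destroy the positional correspondence), and the analogous condition on $v$, additionally excluding $\D_R$, ensures the projected subterm of $v$ is not itself a redex root or rewritten away, so that $v_i|_{q_i}$ survives into the normal form intact. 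I would verify these commutation facts by induction on the length of the prefix, using that bound variables are kept disjoint from free variables and that $\theta$ acts only on free variables.

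Next I would bridge consecutive links. From $v_i^\sharp\theta_i\da \red{R}{*} u_{i+1}^\sharp\theta_{i+1}\da$ and the fact that $v_i\theta_i\da \in \T_{SC}^{\args}(R)$, hence $\T^{\args}_{SN}(R)$ by Lemma~\ref{lem:sc1}(5), I obtain a reduction between the projected subterms, namely
\[
(v_i|_{q_i})\theta_i\da \red{R}{*} (u_{i+1}|_{p_{i+1}})\theta_{i+1}\da,
\]
because the projection position in $v_i^\sharp$ lies strictly below the marked top symbol and the rewrite steps either occur inside the projected subterm or leave it unaffected. Chaining the per-link subterm relations with these reductions yields an infinite sequence alternating $\red{R}{*}$ and $\geq_{sub}$, with infinitely many strict $>_{sub}$ steps. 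The decisive ingredient, as recalled in the text, is that the relation $\red{R}{} \cup >_{sub}$ is well-founded on terminating terms; since all the projected subterms are terminating (they are subterms of members of $\T^{\args}_{SN}(R)$, and subterms of terminating basic-typed terms are terminating), an infinite such sequence with infinitely many strict decreases is impossible. This contradiction shows $C$ is non-looping.

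The main obstacle I anticipate is the commutation lemma of the second paragraph: showing that projecting at $\pi(top(\cdot))$ commutes appropriately with $\theta$ and with $\eta$-long $\beta$-normalization, in the presence of $\lambda$-abstraction. In the first-order setting this is routine, but here descending through a position may cross $\lambda$-binders, and one must check that the selected position is not absorbed or relocated by $\beta$-normalization after substitution. This is precisely where the two prefix side conditions in $(\beta)$ are used, and the delicate point is confirming that a subterm not headed by a free variable (and, for $v$, not headed by a defined symbol) is preserved verbatim under $s \mapsto s\theta\da$; I expect this to be the heart of the proof.
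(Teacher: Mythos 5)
Your proposal is correct and takes essentially the same route as the paper's own proof: project each instantiated pair at the positions given by $\pi$, use condition ($\beta$) together with the two prefix side conditions (which guarantee the projection commutes with substitution, with $\eta$-long $\beta$-normalization, and with the rewrite steps $v_i^\sharp\theta_i\da \red{R}{*} u_{i+1}^\sharp\theta_{i+1}\da$, since the path to the projected position stays constructor-headed) to build an infinite sequence alternating $\geq_{sub}$ and $\red{R}{*}$ with infinitely many strict $>_{sub}$ steps by ($\alpha$), and then contradict the well-foundedness of $\red{R}{} \cup >_{sub}$ on terminating terms, termination of the starting term coming from $\T_{SC}^{\args}(R) \subseteq \T_{SN}^{\args}(R)$ via Lemma \ref{lem:sc1}(5). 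The paper's proof is in fact terser than your plan (it asserts the projected sequence directly from ($\beta$)); the one detail it states explicitly that you leave implicit is $top(v_i)=top(u_{i+1})$, which is what makes the projection positions on both sides of each $\red{R}{*}$ step coincide and so lets the subterm be traced through the reduction.
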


\begin{proof}
Assume that pairs in $C$ generate an infinite chain
\(u_0^\sharp \to v_0^\sharp, u_1^\sharp \to v_1^\sharp, u_2^\sharp \to v_2^\sharp, \cdots\)
in which every $u^\sharp \to v^\sharp \in C$ occurs
infinitely many times,
and let $\theta_0,\theta_1,\ldots$ be substitutions
such that
$v_i^\sharp\theta_i\da \red{R}{*} u_{i+1}^\sharp\theta_{i+1}\da$
and $u_i\theta_i\da, v_i\theta_i\da \in \T_{SC}^{\args}(R)$
for each $i$.
From Lemma \ref{lem:sc1}(5),
$u_i\theta_i\da, v_i\theta_i\da \in \T_{SN}^{\args}(R)$.
Denote $\pi(top(u_i))$ by $p_i$ for each $i$.
Since $v_i^\sharp\theta_i\da \red{R}{*} u_{i+1}^\sharp\theta_{i+1}\da$,
we have $top(v_i)=top(u_{i+1})$.
Hence, from the condition ($\beta$) of the subterm criterion, we have
\[(u_0\theta_0\da)|_{p_0} \geq_{sub} (v_0\theta_0\da)|_{p_1}
  \red{R}{*} (u_1\theta_1\da)|_{p_1} \geq_{sub} \cdots. \]
From the condition ($\alpha$) of the subterm criterion,
the sequence above contains infinitely many $>_{sub}$.
Hence
there exists an infinite sequence
starting with $(u_0\theta_0\da)|_j$
with respect to $\red{R}{} \cup >_{sub}$,
where $j$ is the positive integer such that $j \preceq p_0$.
This is a contradiction with
$u_0\theta_0\da \in \T_{SN}^{\args}(R)$.
\qed
\end{proof}

Finally,
we present a powerful method for proving termination of PFP-HRSs.

\begin{theorem}\label{th:SDP2}
Let $R$ be a PFP-HRS
such that there exists no infinite path
in the static dependency graph.
If any static recursion component $C \in SRC(R)$
satisfies one of the following properties,
then $R$ is terminating.
\begin{itemize}
\item
  $C$ satisfies the subterm criterion.
\item
  There exists a reduction pair $(\gsim,>)$
  such that
  $R \subseteq \mathord{\gsim}$,
  $C \subseteq \mathord{\gsim} \cup \mathord{>}$,
  and $C \cap \mathord{>} \neq \emptyset$.
\end{itemize}
\end{theorem}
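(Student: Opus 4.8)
The plan is to reduce the theorem directly to Corollary \ref{co:SDP1} by checking that every static recursion component is non-looping. Corollary \ref{co:SDP1} already guarantees that a PFP-HRS whose static dependency graph has no infinite path is terminating whenever all of its static recursion components are non-looping. Since $R$ is assumed to be PFP and to have no infinite path in its static dependency graph, the only remaining task is to verify non-loopingness of each $C \in SRC(R)$ under the stated hypotheses.

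First I would fix an arbitrary static recursion component $C \in SRC(R)$ and dispatch on the two cases permitted by the theorem. In the first case, $C$ satisfies the subterm criterion; then Lemma \ref{lem:subterm} immediately yields that $C$ is non-looping. In the second case, there is a reduction pair $(\gsim,>)$ with $R \subseteq \mathord{\gsim}$, $C \subseteq \mathord{\gsim} \cup \mathord{>}$, and $C \cap \mathord{>} \neq \emptyset$; then Lemma \ref{lem:RP} gives that $C$ is non-looping. In either case $C$ is non-looping.

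Since $C$ was arbitrary, all static recursion components of $R$ are non-looping. Combined with the hypothesis that $R$ is PFP and has no infinite path in the static dependency graph, Corollary \ref{co:SDP1} then delivers $SN(R)$, which completes the proof.

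I expect essentially no technical obstacle at this stage: all the real content has already been discharged in Lemmas \ref{lem:RP} and \ref{lem:subterm} (which reduce each non-loopingness criterion to well-foundedness of the reduction pair and of $\red{R}{} \cup \mathord{>_{sub}}$ on terminating terms, respectively) and in Corollary \ref{co:SDP1} (which itself rests on the soundness Theorem \ref{th:SDP1}). The present theorem is therefore a clean packaging of these components, and its proof amounts to a two-case dispatch followed by an appeal to the corollary. The only point worth stating carefully is that the hypotheses apply uniformly to \emph{every} $C \in SRC(R)$, so that the corollary's premise ``all static recursion components are non-looping'' is genuinely met.
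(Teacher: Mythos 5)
Your proposal is correct and matches the paper's own proof exactly: the paper also establishes the theorem by combining Corollary \ref{co:SDP1} with Lemmas \ref{lem:RP} and \ref{lem:subterm}, each lemma handling one of the two cases for non-loopingness. Your write-up simply spells out the two-case dispatch that the paper leaves implicit.
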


\begin{proof}
From Corollary \ref{co:SDP1} and Lemma \ref{lem:RP}, \ref{lem:subterm}.
\qed
\end{proof}

As seen in the theorem,
proving non-loopingness by the subterm criterion
depends only on a recursion component,
unlike proving one by a reduction pair.
Thus
the approach by the subterm criterion
is more efficient than the approach by reduction pairs.

\begin{example}
We show the termination of PFP-HRS $R_{\sym{sqsum}}$
displayed in the introduction.
Let $\pi(\sym{foldl}) = 3$, $\pi(\sym{add}) = 1$, and $\pi(\sym{mul}) = 1$.
Then
all $C \in SRC(R_{\sym{sqsum}})$ (cf. Example \ref{ex:src})
satisfy the subterm criterion
in the underlined positions below:
\[\begin{array}{l}
  \left\{\begin{array}{l}
    \sym{foldl}^\sharp(\lambda xy.F(x,y), X, \ul{\sym{cons}(Y,L)}) \\
    \hspace*{60pt}\to \sym{foldl}^\sharp(\lambda xy.F(x,y), F(X,Y), \ul{L})
  \end{array}\right.\\
  \left\{\begin{array}{l}
    \sym{add}^\sharp(\ul{\sym{s}(X)},Y) \to \sym{add}^\sharp(\ul{X},Y)
  \end{array}\right.\\
  \left\{\begin{array}{l}
    \sym{mul}^\sharp(\ul{\sym{s}(X)},Y) \to \sym{mul}^\sharp(\ul{X},Y)
  \end{array}\right.
\end{array}\]
Hence the termination can be shown by Theorem \ref{th:SDP2}.
\end{example}

\section{Concluding Remarks}

In this paper,
we extended the static dependency pair method
based on strong computability for STRSs \cite{KS07} to that for HRSs.
The following topics remain for future work.

\begin{itemize}
\item
{\em Argument filtering method for HRSs}:
Since it is generally difficult to design reduction pairs,
the argument filtering method
was proposed for the dependency pair method of TRSs \cite{AG00},
and extended to STRSs \cite{K01}.
However, there is no known
argument filtering method for HRSs.
The argument filtering method in \cite{K01}
can only be applied to left-firmness systems,
in which every variable of the left-hand sides occurs at a leaf position.
It may be possible to adapt the argument filtering method
for HRSs without the left-firmness restriction
because the counterexample shown in \cite{K01}
is no longer a counterexample for HRSs.

\item
{\em Notion of usable rules for HRSs}:
The notion of usable rules was introduced for TRSs
by Hirokawa and Middeldorp \cite{HM04},
and by Thiemann, Giesl, and Schneider-Kamp \cite{TGS04}
to reduce constraints
when trying to prove non-loopingness by means of reduction pairs.
These proofs are based on Urbain's proof
of an incremental approach to the dependency pair method \cite{U04}.
It will be of benefit to develop the notion of usable rules for HRSs.

\item
{\em Extending upon the class of plain function-passing}:
We have only shown the soundness of the static dependency pair method
for the class of plain function-passing systems.
The notions of pattern computable closure \cite{B06}
and safe function-passing \cite{KS}
are promising techniques by which this may be extended.
\end{itemize}

\section*{Acknowledgments}

This research was partially supported
by MEXT KAKENHI \#20500008, \#18500011, \#20300010,
and by the Kayamori Foundation of Informational Science Advancement.


\newpage
\profile{KUSAKARI Keiichirou}{%
received B.E. from Tokyo Institute of Technology in 1994,
received M.E. and the Ph.D. degree
from Japan Advanced Institute of Science and Technology
in 1996 and 2000.
From 2000,
he was a research associate at Tohoku University.
He transferred to
Nagoya University's Graduate School of Information Science
in 2003
as an assistant professor
and became an associate professor in 2006.
His research interests include term rewriting systems,
program theory, and automated theorem proving.
He is a member of IPSJ and JSSST\@.}

\profile{ISOGAI Yasuo}{%
received the B.E. and M.E. degrees
from Nagoya University in 2006 and 2008, respectively.
He engaged in research on term rewriting systems.
He is going to work at Hitachi Ltd. from April 2008.
}

\profile{SAKAI Masahiko}{%
completed graduate course of Nagoya University in 1989
and became Assistant Professor,
where he obtained a D.E. degree in 1992.
From April 1993 to March 1997,
he was Associate Professor in JAIST.
In 1996 he stayed at SUNY at Stony Brook for six months
as Visiting Research Professor.
From April 1997,
he was Associate Professor in Nagoya University.
Since December 2002,
he has been Professor.
He is interested in term rewriting system,
verification of specification and software generation.
He received the Best Paper Award from IEICE in 1992.
He is a member of JSSST\@.}

\profile{Fr\'{e}d\'{e}ric Blanqui}{%
received his PhD degree in September 2001
at the University of Paris 11 (Orsay, France).
He did a postdoc at Cambridge University (UK) 
from October 2001 to August 2002,
and at Ecole Polytechnique (Palaiseau, France)
from September 2002 to August 2003.
Since October 2003,
he is permanent full-time INRIA researcher at LORIA (Nancy, France).
He is interested in rewriting theory,
type theory, termination, functional programming and proof assistants.
He received the Kleene Award for the best student paper at LICS'01,
and the French SPECIF 2001 Award for his PhD.
}

\end{document}